\def\BibTeX{{\rm B\kern-.05em{\sc i\kern-.025em b}\kern-.08emT\kern-.1667em\lower.7ex\hbox{E}\kern-.125emX}}
    \newcommand{\nop}[1]{}
	\newcommand{\presec}{\vspace{-0.0in}}
	\newcommand{\postsec}{\vspace{-0.0in}}
	\newcommand{\presub}{\vspace{-0.00in}}
	\newcommand{\postsub}{\vspace{-0.00in}}
	\newcommand{\precaption}{\vspace{-0.05in}}
        \newcommand{\dsname}{ResidualSketch}
	\definecolor{greener}{RGB}{0,166,0}
	\definecolor{reder}{RGB}{255,0,0}
	\definecolor{bluer}{RGB}{0,0,255}
\newtheorem{Def}{Definition}
\newtheorem{Lem}{Theorem}
\begin{document}

%
\title{\dsname: Enhancing Layer Efficiency and Error Reduction in Hierarchical Heavy Hitter Detection with ResNet Innovations}

\author{

    Xilai Liu\IEEEauthorrefmark{1}\IEEEauthorrefmark{4},
    Yuxuan Tian\IEEEauthorrefmark{2},
    Xiangyuan Wang\IEEEauthorrefmark{2},
    Yuhan Wu\IEEEauthorrefmark{2},
    Wenhao Wu\IEEEauthorrefmark{1}\IEEEauthorrefmark{4}, 
    Tong Yang\IEEEauthorrefmark{2}, 
    and Gaogang Xie\IEEEauthorrefmark{3}\IEEEauthorrefmark{4}
    \\

    \IEEEauthorblockA{
    \IEEEauthorrefmark{1}\textit{Institute of Computing Technology, Chinese Academy of Science,} \textit{Beijing, China,}\\
    \IEEEauthorrefmark{2}\textit{School of Computer Science, Peking University,} \textit{Beijing, China,}\\
    \IEEEauthorrefmark{3}\textit{CNIC, Chinese Academy of Sciences, China}\\
    \IEEEauthorrefmark{4}\textit{University of Chinese Academy of Sciences, China}\\
    }
    }

\maketitle

%
\begin{abstract}
In network management, swiftly and accurately identifying traffic anomalies, including Distributed Denial-of-Service (DDoS) attacks and unexpected network disruptions, is essential for network stability and security. Key to this process is the detection of Hierarchical Heavy Hitters (HHH), which significantly aids in the management of high-speed IP traffic. This study introduces ResidualSketch, a novel algorithm for HHH detection in hierarchical traffic analysis. ResidualSketch distinguishes itself by incorporating Residual Blocks and Residual Connections at crucial layers within the IP hierarchy, thus mitigating the Gradual Error Diffusion (GED) phenomenon in previous methods and reducing memory overhead while maintaining low update latency. Through comprehensive experiments on various datasets, we demonstrate that ResidualSketch outperforms existing state-of-the-art solutions in terms of accuracy and update speed across multiple layers of the network hierarchy. All related codes of ResidualSketch are open-source at GitHub.
\end{abstract}

%
%

%

\begin{IEEEkeywords}
    Hierarchical Heavy Hitter, Sketch, Data Stream
\end{IEEEkeywords}


\presec
\vspace{-0.1in}
\section{Introduction}\label{sec:introduction}
\postsec
\label{sec:algorithm}

Network measurement plays a critical role in modern network management, particularly in the detection and analysis of traffic anomalies \cite{anomaly2021burst,aoc2024}. 
Such anomalies, ranging from Distributed Denial-of-Service (DDoS) attacks \cite{ddos2024cloud,poseidon,ddos2021jaqen} to unexpected network blocks \cite{blockingIMC,cloudflare_blocking}, where traffic from multiple IP sources may suddenly appear or disappear, pose significant challenges to network stability and security. 
The rapid and precise identification of these irregularities is vital for maintaining network integrity and performance in large-scale network operations.

A key task in network measurement is the identification of Hierarchical Heavy Hitters (HHHs) \cite{online2004}. IP traffic is inherently hierarchical and can be organized in various forms, such as source IP address prefixes or source-destination IP address combinations.  HHHs are defined as traffic aggregates that significantly exceed typical volumes within these hierarchies. Unlike the traditional task of identifying heavy hitters (HH) \cite{heavyhitter}, which concentrates on large individual flows, HHH detection presents a more complex challenge by additionally identifying aggregated smaller flows that cumulatively generate significant traffic at higher prefix levels, commonly referred to as HHs with shorter prefixes. The detection of HHHs is pivotal in network measurement, given that anomalies---ranging from DDoS attacks to unexpected network blocks---manifest as significant traffic fluctuations, often concentrating at specific nodes within the hierarchy.

%
%
%
%

IP traffic, characterized by its high-speed data streams ranging from Gigabits to Terabits per second \cite{DeAR2023}, presents significant challenges in real-time processing due to constraints in processing time and memory. 
Accurate and low-latency monitoring of this traffic, especially under resource limitations, necessitates the deployment of efficient algorithms for tracking various flow events, including HH \cite{heavyhitter}, heavy change \cite{univmon} and \textit{etc.}
Sketched-based approaches \cite{countsketch, cmsketch,cusketch} have gained prominence in this context, with recent advancements \cite{univmon,elasticsketch} focusing on optimizing their design to meet the nuanced demands of network operators.

%
%

\vspace{-0.15in}

\begin{figure}[htbp]
\begin{minipage}[b]{0.48\textwidth}
\centering
\includegraphics[width=\textwidth]{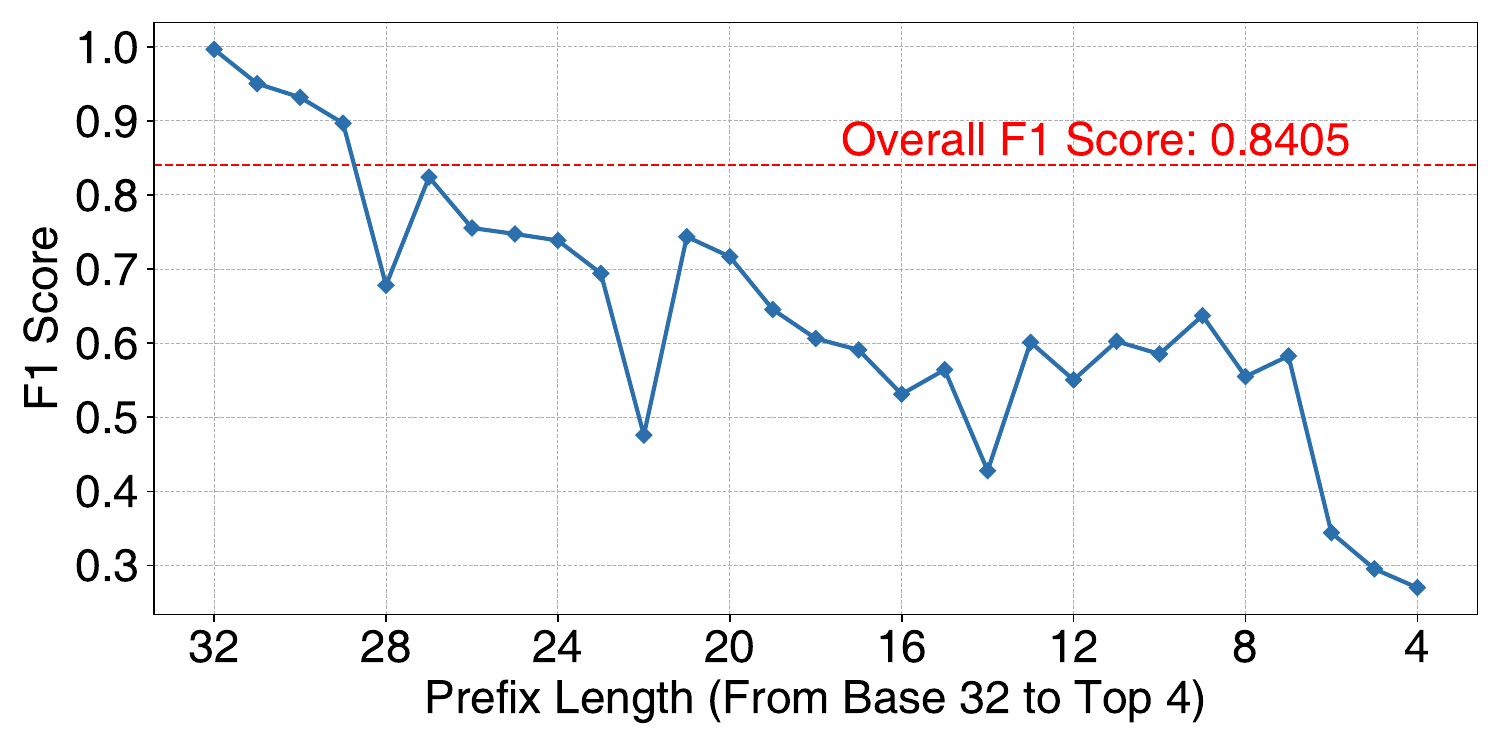}
\end{minipage}
\caption{F1 Score for HHH Detection Across Various Layers in the MAWI Dataset \cite{mawi} Using Cocosketch (100KB)}\label{HHH_layer_error}
\vspace{-0.1in}
\end{figure}

The detection of HHH has been extensively explored, primarily through two methodologies: Heavy Hitter (HH)-based \cite{ancestry,hhh2012,mvpipe} and Arbitrary Partial Key (APK)-based \cite{unbiasedspacesaving, cocosketch, hyperuss} approaches. 
HH-based sketches concentrate on pinpointing heavy flows at each IP layer, employing distinct HH sketches for each layer. 
This process entails the identification of potential HHs across layers, culminating in the formation of the final HHH. 
Subsequent optimizations have enhanced these solutions in terms of accuracy, processing speed, and convergence \cite{RHHH, mvpipe}. 
However, a notable drawback of these methods is the inefficient use of memory, as multiple layers might redundantly record identical HH information.

 On the other hand, APK-based solutions implement a unified sketch methodology, analogous to a singular HH sketch. 
 This method involves the records of singular IP flows to detect heavy flows across all hierarchical layers. 
 A notable drawback of this strategy is the propagation of errors. Particularly, the CocoSketch technique \cite{cocosketch}, an evolved version of APK-based solutions, exhibits a phenomenon of \textit{Gradual Error Diffusion} (GED) as the data traverses through the layers of the IP hierarchy. 
 Unlike the abrupt and substantial error increase seen in gradient explosion phenomena within neural networks, GED manifests as a steady and more gradual decline in accuracy from individual addresses to broader subnets, as shown in Figure \ref{HHH_layer_error}.
 This gradual decline in accuracy, though not as rapid as gradient explosion, becomes increasingly problematic in higher hierarchical layers, especially within extensive subnets or aggregated IP ranges. 
 For in-depth definitions and examinations of above methodologies, refer to Sections \ref{sec:heavyhitter} and \ref{sec:subsetsum}.

In this study, we present ResidualSketch, a sketch-based flow measurement algorithm designed for HHH problem.
By integrating multiple pivotal layers within the IP hierarchy---ranging from individual IP addresses to aggregated subnets---and incorporating residual connection techniques, we can effectively address HHH problem. 
Compared to HH-based Sketches, ResidualSketch achieves provable accuracy guarantees with minimal memory overhead. 
Against APK-based solutions, ResidualSketch demonstrates reduced error rates in estimating subnet activity, all while maintaining comparable update latency.

The primary challenge of ResidualSketch centers on mitigating GED effectively across hierarchical layers of IP addresses and subnets.
To address this, we propose two main techniques. 
(a) Leveraging the identified Heavy Hitter (HH) clustering phenomenon and employing \textit{Residual Blocks}, this approach ensures each \textit{Residual Block} oversees prefix length from one critical layer to the next within the IP hierarchy. This strategy effectively mitigates error propagation at each critical layer, thereby improving the system's overall accuracy.
(b) To remove duplicate counts across different Residual Blocks, we introduce the concept of \textit{Residual Connection}. This technique, inspired by ResNet architectures \cite{resnet}, ensures that counts of heavy flows from lower blocks are not carried into higher blocks. 
Instead, only the \textit{residual} counts---that is, the counts that are additional or unique to each higher block—are maintained. Our key contributions are as follows:

1) We elucidate the structure of HHH and GED phenomenon in state-of-the-art techniques, highlighting HH clustering and error propagation in IP hierarchies.

2) We introduce ResidualSketch, an efficient algorithm designed to address hierarchical IP characteristics, thereby effectively solve HHH detection problem.

3) We theoretically analyze ResidualSketch' properties and conduct extensive experiments on various datasets to show that the ResidualSketch achieves high accuracy and fast update than the state-of-the-art.


\presec
\section{Background and Motivation}
\label{sec:background}
\postsec

In this section, we begin by introducing essential definitions for HH and HHH. Next, we delve into work related to them, including HH-based and APK-based HHH sketch solutions. The frequently used symbol is in
Table \ref{NT}. 

\begin{table}[htbp]
\centering
\vspace{-0.05in}
\caption{Notation Table}
\label{NT}
\begin{tabularx}{0.48\textwidth}{|c|X|}
\hline
Notation & Meaning \\
\hline
$v_f$ & An flow $f$'s value \\
\hline
$N$ & Data stream's length, i.e., packet counts \\
\hline
$\theta$ & A threshold parameter \\
\hline
$L$ & The number of levels \\
\hline
$d$ & The number of layers \\
\hline
$[l_i,l_{i+1})$ & The layer range covered by one level ($l_{L+1}=d+1$) \\
\hline
$B_i$ & $i$-th level's bucket array \\
\hline
$h_{ij}(\cdot)$ & $i$-th level's $j$-th hash function \\
\hline
$B_i[j][m]$ & $m$-th bucket at $i$-th level, $j$-th hashed array \\
\hline
\( P_f \) & Probability of key replacement for new flow $f$ \\
\hline
\end{tabularx}
\vspace{-0.1in}
\end{table}

\presub
\subsection{Problem Definition}
\postsub

In accordance with the conceptual frameworks described in \cite{mvpipe} and \cite{RHHH}, we define data streams as a continual sequence of items, each denoted by a tuple $( f, v_f) $ and processed precisely once. Within the context of network measurement, $f$ acts as a flow identifier, while $v_f$ can either be a unit count or represent packet size. For simplicity, we discretize the unending data streams into finite time windows, focusing on a single window where $N$ denotes the whole packet counts in it. Then, we formally define HH.

\begin{Def}
\label{def:first}
\textbf{Heavy Hitter (HH)}: An item $e$ is an HH if its frequency $v_e$ is above the threshold: $\theta \cdot N$.  The item $e$ can vary in prefix length, but is often a fully specified IP address.
\end{Def}

This work centers primarily on hierarchically structured data streams characterized by flow IDs, akin to IP traffic. For the one-dimensional (1D) HHH problem, $f$ can represent either a source or destination IP address. In the case of the two-dimensional (2D) HHH problem, $f$ encapsulates a source-destination IP address pair. The hierarchical structure can be aggregated at different levels of granularity, either at the byte-level or bit-level, resulting in four distinct types of hierarchies: 1D-byte, 2D-byte, 1D-bit, and 2D-bit. 

\begin{figure}
\centering
\includegraphics[width=0.45\textwidth]{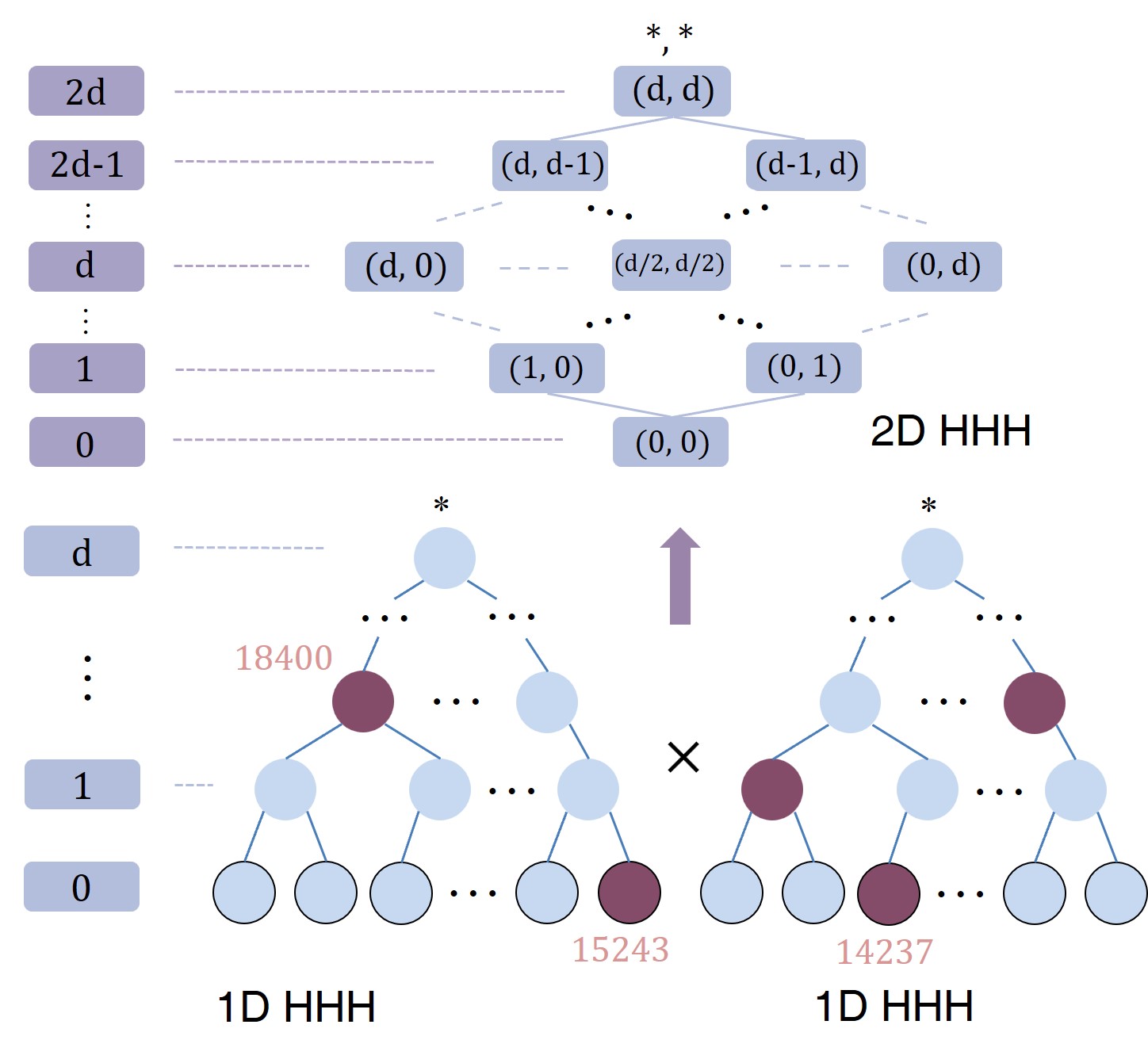}
\precaption
\captionsetup{font=footnotesize}
\caption{The binary tree represents the hierarchical flow IDs structure, such as srcIP or dstIP, aggregating from layer 0 to \(d\), from fully specified IPs to fully generalized '*'. In all figures, dark mauve nodes denote HHs. In this figure, they are shown with \(T\) values (threshold $10k$ here), are counted from the bottom up to form a set known as HHH. In a 2D scenario, there are \((d + 1)^2\) nodes and \(2d + 1\) layers of aggregation for HHH determination. }
\label{gp}
\vspace{-0.2in}
\end{figure}

As Figure \ref{gp} shows, a node's layer in a hierarchical structure corresponds to its depth, ranging from $0$ at the base to $d$ at the apex, with the layer index inversely related to prefix length. Here, $d$ can be $4$ or $32$ according to byte or bit granularity. At layer 0, domain $\mathcal{U}$ is composed of fully specified IP addresses ready for generalization. For example, the specific IP address $p$, denoted as $192.168.0.1$, can be generalized to its prefix $q$, represented as $192.168.0.*$. Here, prefix $q$ is the ancestor of $p$, making $p$ a descendant of $q$. This generalization process establishes the hierarchical layers, depicted as a tree structure.

For the HHH detection, our primary objective is to locate all HHs at various hierarchical layers. Notably, discovering an HH \( p \) inherently implies that its ancestral nodes may also be HHs, leading to duplicate entries. To address the issue of duplication, we introduce \textit{conditional count}, denoted by \( C_{p \mid P} \). This metric computes the frequency of \( p \) after discounting the frequencies of previously discovered HHs listed in the set \( P \). We employ a bottom-up approach, starting from the base layer (0) and proceeding to the higher layers to preclude duplicate counting. Initially, we scrutinize fully specified flows and accumulate their HHs into the set \( HHH_0 \). Utilizing \( HHH_0 \), we calculate the conditional counts for flows at level 1. If \( C_{p \mid HHH_0} \geq \theta \cdot N \), we incorporate \( p \) into the set \( HHH_1 \). We continue this methodology until the highest level is reached. The HHH can be formally defined as follows. 

\begin{Def}
\label{def:two}
\textbf{Hierarchical Heavy Hitter (HHH)}: The set $HHH_0$ contains the HHs of fully specified flows. For the HHH set at each layer $l$, $HHH_l$, $0< l \leq d$, we define:
$$
H H H_l=H H H_{l-1} \cup\left\{p:\left(p \in \operatorname{layer}(l) \wedge C_{p \mid H H H_{l-1}} \geq \theta  N\right)\right\} .
$$
The cumulative set $HHH_d$, encompassing all HHs, is identified as the primary target for detection. 
\end{Def}

\presub
\subsection{HH-based HHH sketches}
\postsub
\label{sec:heavyhitter}

Notable algorithms in HH detection, such as Elastic Sketch \cite{elasticsketch}, SpaceSaving \cite{spacesaving}, Lossy Counting (LC) \cite{lossycounting}, MV Sketch\cite{mvsketch}, and Unbiased Space Saving (USS) \cite{unbiasedspacesaving}, employ limited counters to capture frequent data flows while filtering out less frequent ones. Both Space-Saving and USS employ a Stream-Summary data structure, with USS incorporating probabilistic mechanisms for unbiased results, commonly categorized within the APK-based sketch. Meanwhile, MV Sketch \cite{mvsketch} distinguishes candidate HHs through majority voting.

Building upon these methods, HHH algorithms typically maintain an HH sketch instance per layer and incorporate specialized enhancements. SpaceSaving, for example, has been adapted in studies like \cite{hhh2012, RHHH} for HHH-specific detection. Based on LC method, Cormode's team present an algorithm termed \textit{full ancestry} \cite{ancestry} that utilizes a dynamic lattice structure for node addition and removal. MVPipe, an adaptation of MVSketch, can be even deployed in programmable switches but is limited to 1D-byte HHH detection. Nevertheless, these approaches face limitations due to excessive allocation of HH sketch instances across layers, leading to memory inefficiency, particularly in layers with sparse HH counts. 

\presub
\subsection{APK-based HHH sketches}
\postsub
\label{sec:subsetsum}

The task of Arbitrary Partial Key (APK) querying, as delineated in \cite{cocosketch}, involves the retrieval of arbitrary partial keys corresponding to a predefined full key. Grounded in the same theoretical framework as the subset sum problem, leading methodologies in this domain, such as those presented in \cite{unbiasedspacesaving}, \cite{cocosketch} and \cite{hyperuss}, achieve high fidelity in emulating Probability Proportional to Size (PPS) sampling. Consequently, these approaches are able to preserve a greater extent of information relative to prior sketch-based techniques for addressing the APK task. These methods are also naturally capable of addressing HHH problem. In this context, a full key constitutes the complete srcIP or dstIP, while a partial key represents any IP address prefix of variable length. 

The limitation of APK-based HHH sketches stems from their GED characteristic. For example, the USS employs unbiased PPS sampling for estimation. As layer depth increases, an accumulation of smaller flows contributes to the formation of an HH, exacerbating the sampling error. Nevertheless, this issue does not significantly impair the overall accuracy metrics, as a large fraction of HHs are identified at the base layer mitigating this drawback. This phenomenon is corroborated by Figure \ref{HHH_layer_error}, where the overall F1 score tends towards the base layer's F1 score, yielding better results than expected. However, this methodology contradicts the primary goal of certain applications, such as DDoS mitigation, which require accurately identifying the aggregation points of smaller flows---a task at which this method is inadequate.

This motivates the design of ResidualSketch, a combination of HH- and APK-based HHH sketches. It employs Residual Blocks to mitigate sampling error in upper layers and Residual Connections to minimize redundancy across Residual Blocks.

\presec
\section{\dsname}\label{sec:problemdef} \postsec

\subsection{Design} \postsub

\noindent\textbf{Data Structure:} As Figure \ref{insertion_1} shows, the data structure is organized into \( L \) pivotal \textit{levels}, each of which is a bucket array corresponding to a specific pivotal layer in the IP tree structure. Each level is responsible for overseeing the layers in the range \([l_i, l_{i+1})\), where \( l_{L+1} = d + 1 \). These bucket arrays at each level are also  termed as Residual Blocks. At each \( i \)-th level (or Residual Block), the bucket array is partitioned into \( g \) hashed arrays using \( g \) unique hash functions. These functions are denoted as \( h_{ij}(\cdot) \) for \( 1 \leq j \leq g \). Each hashed array contains \( b_i \) buckets, where each bucket is structured to store a pair consisting of a key $k$ and its corresponding attribute value $v$. To uniquely identify a specific bucket, the \( m \)-th bucket in the \( j \)-th hashed array at the \( i \)-th level is labeled as \( B_i[j][m] \), where \( 1 \leq i \leq L \), \( 1 \leq j \leq g \), and \( 1 \leq m \leq b_i \).

\noindent\textbf{Residual Block:} Residual Block should belong to the class of APK-based sketch, including USS, Cocosketch and Hyper-USS. When the Residual Block is USS, $g$ equals $1$ and it uses the Stream Summary technique to identify the specific bucket. For Cocosketch, we should use the version without \textit{Circular Dependency Removal}. Hyper-USS is applicable for scenarios where a single flow is associated with multiple attribute values. The allocation of memory for these algorithms is closely linked to a threshold parameter \(\epsilon\), serving to limit the estimation error for any keys to a maximum of \(\epsilon N\). 
The required initial memory allocation is at least \(\frac{1}{\epsilon}\) to ensure the estimation error remains within the specified bound.

\begin{figure}
\includegraphics[width=0.47\textwidth]{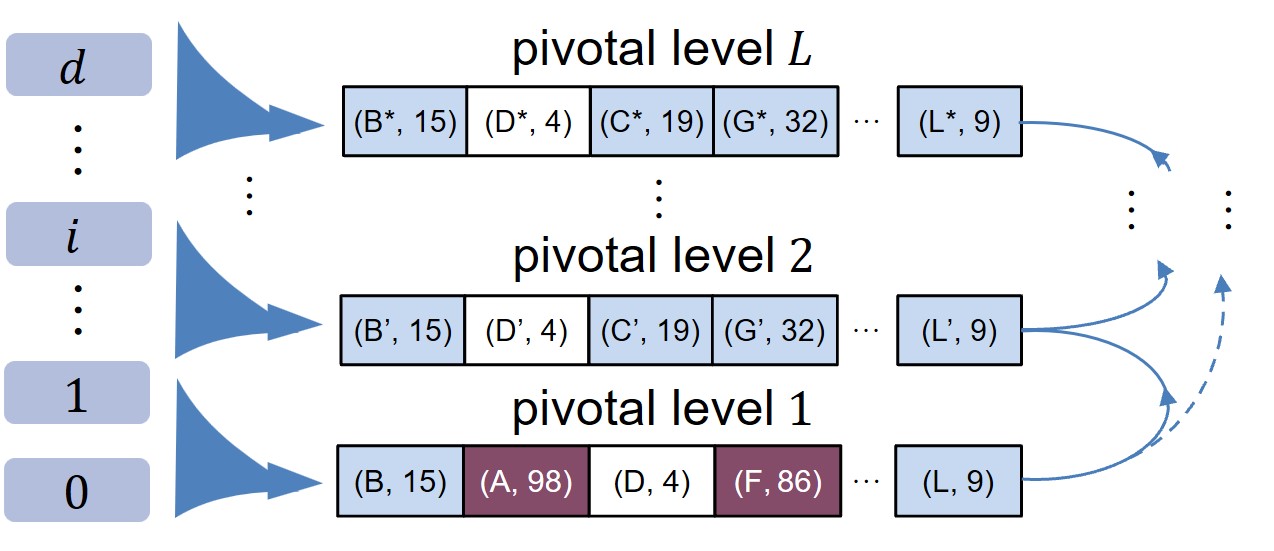}

\caption{The Architecture of Our Algorithm (A*, A' represents A's ancestor)}
\vspace{-0.2in}
\label{insertion_1}
\end{figure}

\noindent\textbf{Insertion:} To insert a flow $f$ into the data structure, it the passes through a pipeline across the levels of bucket arrays. Initially, in the first level, the key \( f \) is hashed into \( g \) distinct buckets---specifically, \( B_1[j][h_{1j}(f)], 1 \leq j \leq g \)---using \( g \) hash functions. When matching this incoming key \( f \) with existing keys \( B_1[j][h_{1j}(f)].k_j \) within these buckets, we encounter one of three scenarios:

\textit{Case 1:} The key \( f \) is already present in at least one of the \( g \) hashed buckets \( B_1[j][h_{1j}(f)] \). In this situation, we directly update its value \( v_f \) in the corresponding bucket.

\textit{Case 2:} The key \( f \) is not found in any of the \( g \) hashed buckets, and at least one bucket is vacant. Here, we select an empty bucket to store the incoming item.

\textit{Case 3:} The key \( f \) is absent in all \( g \) hashed buckets, and each bucket is already occupied. In this instance, we identify the bucket with the smallest attribute value. We use emulated PPS sampling technique to maintain unbiasness, which is the core part of APK-based sketches. The new incoming item \( f \) is competed with the item in this smallest bucket based on value frequency. The probability of successfully replacing the existing key $k_j$ with \( f \),  \( P_f \), is given by \( \frac{v_f}{v_f + \min_{j} B_1[j][h_{1j}(f)].v} \). The pseudo-code is shown in Algorithm \ref{insertion_algo}. 

\noindent\textbf{Residual Connection:} After inserting a flow item \( (f, v_f) \) into the bucket arrays at the first level, we immediately get its attribute value. Should this value exceed a predefined threshold---a parameter closely tied to the underlying Residual Block's---two specific actions are initiated: 

1. The item is marked to avoid eviction at this current bucket array level, essentially securing its position. 

2. The item is tagged to prevent it from being recorded again in the bucket arrays of higher levels. When it just equals the threshold, we locate the bucket in the higher levels and subtract its value and it's not recorded anymore.

If the attribute value fails to exceed the predefined threshold after insertion at the first level, flow \( (f, v_f) \) is forwarded to the subsequent level for continued insertion. This concurrent check-and-forward mechanism ensures that only items with attribute values surpassing the threshold are retained at their initial insertion level, while also preventing their duplication in upper levels. For the predefined threshold, it is usually set to a moderate size to prevent premature or delayed invalid locking. Therefore, it is recommended to set the threshold slightly lower than the general threshold setting of HH. We term this technique of allowing only the residual count of HHs to be transmitted to the next levels as Residual Connection. 

\begin{figure}
\includegraphics[width=0.47\textwidth]{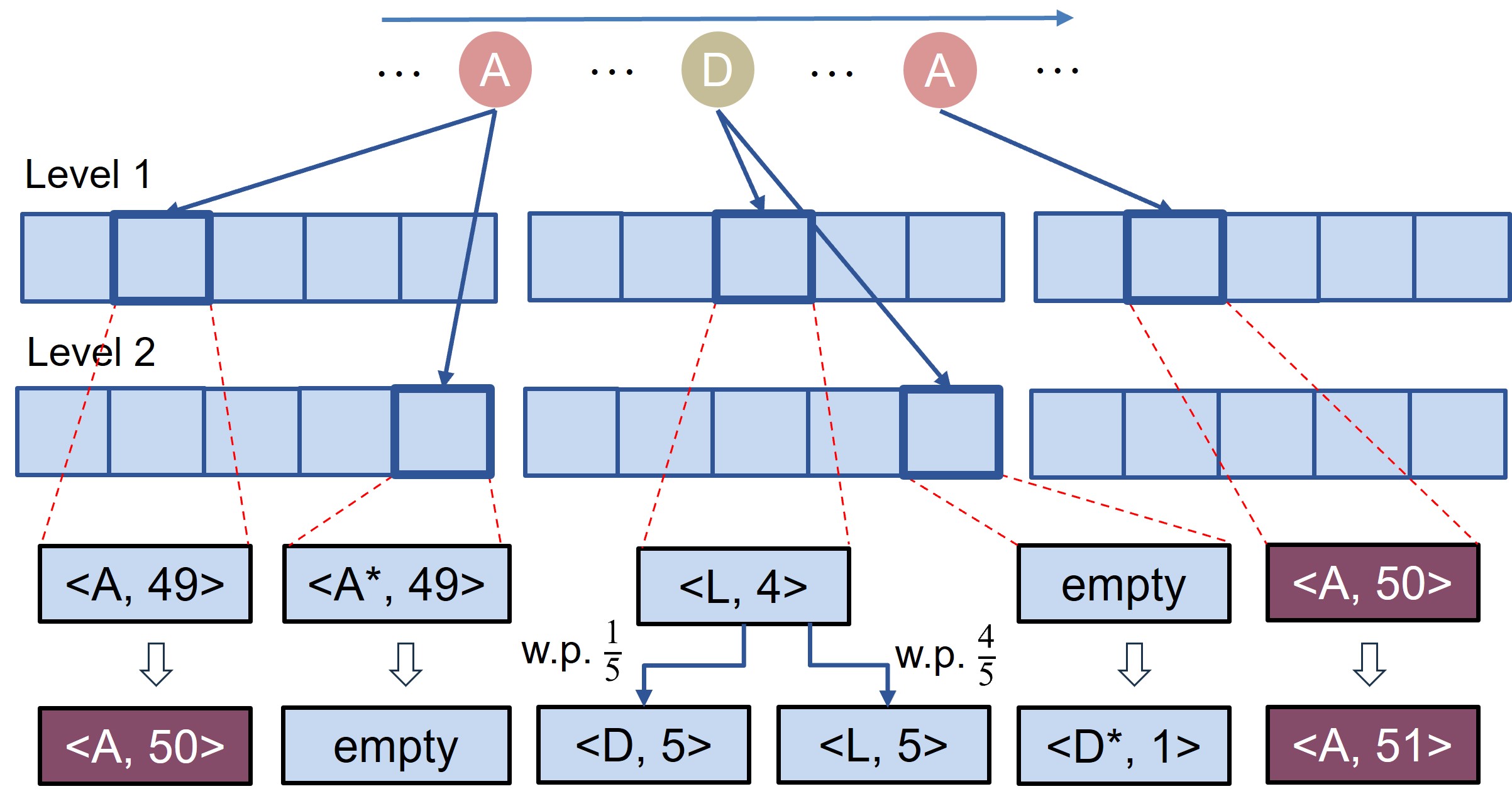}

\caption{Insertion Examples of Residual Sketch (USS as the Residual Block)}
\label{insertion_2}
\vspace{-0.2in}
\end{figure}

\begin{algorithm}[htbp]
\caption{Insertion Procedure}\label{insertion_algo}
\DontPrintSemicolon  

\KwIn{an incoming flow $f$ with value $v_f$}
\SetKwFunction{FMain}{Insert\_level\_i}
\SetKwProg{Fn}{Function}{:}{}
\Fn{\FMain{$f$, $i$, $isDecrement$}}{
    $Min \gets \infty, MinPos \gets 0$\;
    \For{$j \gets 1$ \KwTo $g$}{
        \If{$f == B_i[j][h_{i,j}(f)].k_j$}{
            $B_i[j][h_{i,j}(f)].v \mathrel{{+}\mspace{-1mu}{=}} v_f$\;
            \If{$i \neq 1$ \textnormal{\textbf{and}} $isDecrement$}{
                $B_i[j][h_{i,j}(f)].v \gets max(0, B_i[j][h_{i,j}(f)].v - \theta_i$)\;
            }
            \KwRet{$B_i[j][h_{i,j}(f)].v$}\;
        }
        \If{$B_i[j][h_{i,j}(f)].v < Min$}{
            $Min \gets B_i[j][h_{i,j}(f)].v, MinPos \gets j$\;
        }
    }
    $B_i[MinPos][h_{i,MinPos}(f)].v \mathrel{{+}\mspace{-1mu}{=}} v_f$\;
    \If{$rand() \% B_i[MinPos][h_{i,MinPos}(f)].v < v_f$}{
        $B_i[MinPos][h_{i,MinPos}(f)].k_j = f$\;
    }
    \KwRet{$0$}\;
}
\For{$i \gets 1$ \KwTo $d$}{
    $isDecrement \gets 0$\;
    \If {$i \neq 1$ \textnormal{\textbf{and}} $cnt > \theta_i$}{
        break\;
    }
    \If{$i \neq 1$ \textnormal{\textbf{and}}  $cnt == \theta_i$}{
        $isDecrement \gets 1$\;
    }
    $f = f$ $\& \text{ mask}[l_i]$\;
    $cnt =$ \FMain{$f$, $i$, $isDecrement$}\;
}

\end{algorithm}

\noindent\textbf{Insertion example: } In Figure \ref{insertion_2}, we use USS as the Residual Block and only one bucket is located at each level due to Stream Summary technique. There are only two levels in the example and we choose $50$ as the threshold. We insert item $A$ first. The Stream Summary technique locates the bucket with the key and the recorded attribute is updated. As it surpasses the threshold at first time, we locate the bucket in the second level recording its ancestor $A*$ and delete it, the corresponding bucket becomes empty. Next, we insert the item $D$. As no buckets contains this key and the first level is full, the bucket with the lowest attribute value is located, the replacement probability $P$ is set to $\frac{1}{1+4} = \frac{1}{5}$. As the item doesn't surpasses the threshold, we go to next level. At this time, we find one empty bucket left out by insertion of $A$ and we directly insert to it. In the third insertion, the item $A$ is inserted again. After we located the bucket recording $A$, we update it. As it surpasses the threshold, we don't go to next level and insertion ends.

\noindent\textbf{HHH detection:} At the end of each time window, to estimate the HHH, we begin by sequentially scanning all non-empty buckets across all levels, starting from the bottom. 
In each level, defined by the layer interval $[l_i, l_{i+1})$, we initially restore flow counts to their original values by compensating for any reductions caused by residual connections from lower levels. 
This adjustment yields the active buckets within the $l_i$ layer. 
Subsequently, it computes the sets $HHH_j$ for $j \in [l_i, l_{i+1})$, determining these based on conditional counts in a bottom-up approach across layers, in alignment with the definition of HHH. 
Specifically, buckets within the $j$-th layer with counts exceeding the threshold $\theta N$ undergo a conditional count calculation against $HHH_{j-1}$. 
Buckets whose adjusted counts still surpass $\theta N$ are incorporated into $HHH_j$. 
This step concludes with the integration of $HHH_{j - 1}$ into $HHH_{j}$. 
Buckets not meeting the threshold are merged based on their prefix length, $d - j - 1$, to form buckets for the subsequent layer, $j+1$.
This iterative process is continued until the $l_{i+1}$-th layer is reached. 
Upon completing the calculations across all levels, the set $HHH_d$ is obtained as the final outcome.

\subsection{Level Setting and Memory Allocation} \postsub

The level setting is closely tied to the performance of ResidualSketch. When $L$ is equal to $d$, it becomes HH-based HHH sketch with Residual Connections. When $L$ is equal to 1, it degenerates into APK-based HHH sketch. We should choose an appropriate level number to balance and avoid the deficiency of two of them.

The level number and its corresponding range are selected according to the potential for HHH cluster formation. we hypothesize the existence of three cluster ranges:
\begin{itemize}
    \item A cluster with a layer of prefix length 32, where individual IP addresses are likely to form a significant number of HHs.
    \item A cluster with a layer of prefix length near 24,  characterized by the predominant subnet prefix allocations for small to medium-sized corporations \cite{bgpprefix}.
    \item A cluster, characterized by its relatively short prefix length, ranging from 8 to 12, which aggregates numerous small flows.
\end{itemize}

As illustrated in Figure \ref{HHHratio}, large backbone networks demonstrate attributes that align with the previously described cluster formations. In light of this observation, we recommend adopting either a 2 or 3 level hierarchy. On one hand, allocating a greater number of levels leads to fewer buckets per level, necessitating a larger initial memory allocation to meets the requirements of each layer's HH data structure. On the other hand, such a configuration conforms to the cluster structure inherent in HHH distribution, thereby facilitating optimal performance. 

In allocating memory across levels, priority should be given to ensuring each level has enough memory to handle the HH count before adjustments for Residual Connections. The detail that upper levels track residual HH counts from lower levels, suggesting they might need less memory, is secondary. To simplify, each level is allocated the same amount of memory, ensuring sufficiency and avoiding unnecessary complexity from presumed differences in needs across levels.

\vspace{-0.1in}
\begin{figure}[htbp]
\begin{minipage}[b]{0.48\textwidth}
\centering
\includegraphics[width=\textwidth]{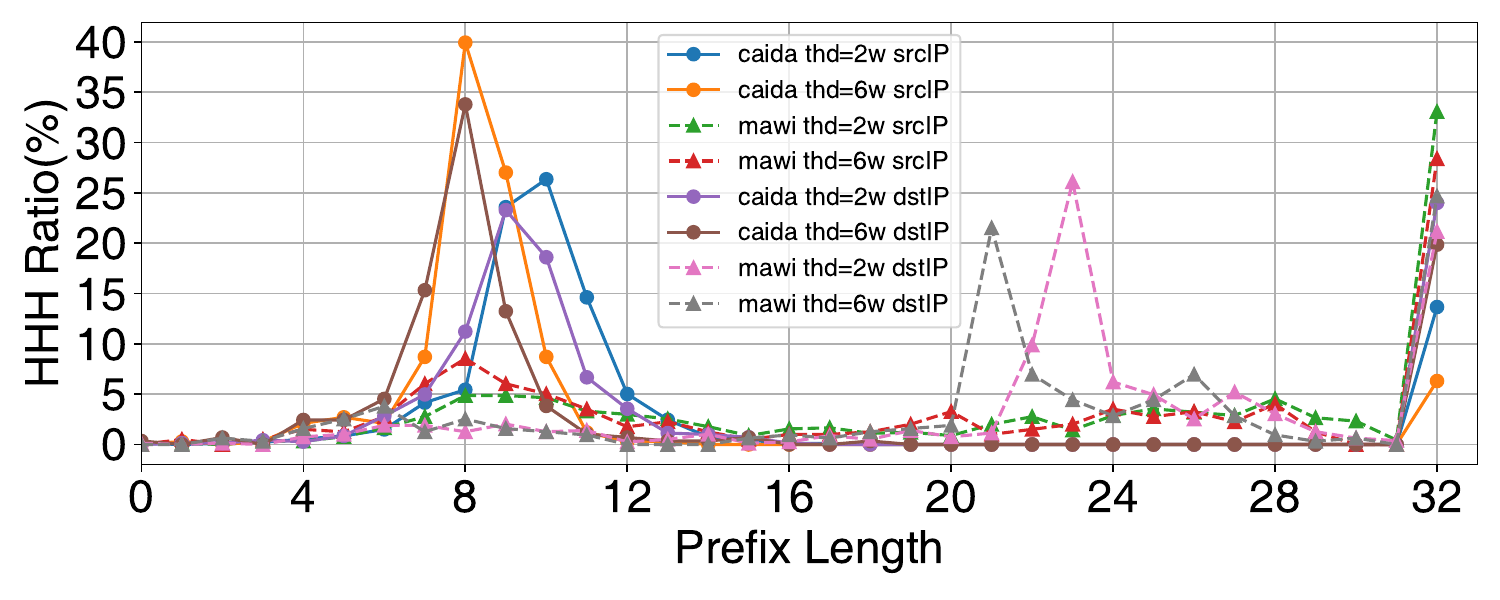}

\end{minipage}
\precaption
\caption{HH ratio of Different Layers on the CAIDA and MAWI Datasets}\label{HHHratio}
\vspace{-0.15in}
\end{figure}

\presub
\subsection{Discussion} \postsub

\noindent\textbf{Residual Connection Strategy: } Duplication at various levels results in higher space occupancy and lower processing throughput, necessitating a method that is both simple and effective. 
Traditional approaches like Full Ancestry \cite{ancestry} mitigate duplication by retaining only frequency differences in upper layers, yet require the maintenance of numerous nodes, thereby increasing space complexity. 
MVPipe \cite{mvpipe} utilizes a majority vote mechanism for efficient duplicate elimination by distributing flow information across layers, though this method still necessitates considerable memory allocation for all layers in advance, leading to inefficiency. 
We posit that duplication primarily occurs in potential large flow records, presenting opportunities for elimination.
The outcome of competition among smaller flows is characterized by unpredictability, making it challenging to foresee potential duplication.
In response, our algorithm introduces the Residual Connection strategy as a simple and effective solution, significantly boosting processing throughput without compromising accuracy.

\noindent\textbf{Two Dimensional HHH Detection:} In 2D HHH detection, the relationship of generalization creates a lattice structure. Within this framework, the APK-based sketches present even more benefits compared to the HH-based sketches. This is due to the requirement of the latter to instantiate a HH instance for every node within the lattice, which is not as efficient. Despite this, the HH clustering phenomenon persists, with flows of interest predominantly observed in these clusters. Constructing levels closer to these clusters renders the ResidualSketch method more advantageous than APK-based sketches. This is because the error propagated to clusters is significantly larger for APK-based sketches, and the accuracy is not as high as levels constructed closer to the clusters. This principle aligns with the rationale behind 1D HHH detection.

\noindent\textbf{Sliding Window: }The technique of supporting a sliding window typically involves the use of smaller sub-windows \cite{slidingsketch,microscopesketch,omniwindow}. In these methods, the counts of flows within these sub-windows are smaller compared with the entire window, leading to more frequent competitions and evictions in an APK-based sketch. This increase in eviction rate can result in larger sampling errors for upper-layer HHs if a single level is utilized exclusively at the base layer. Consequently, the employment of multiple levels becomes necessary, further highlighting the advantages of our algorithm.


\presub
\section{Theoretical Analysis} \postsub
In this section, we first analyze the GED characteristics of APK-based sketches, then detail ResidualSketch’s space and time complexities,  unbiasness and error bounds.

\subsection{Gradual Error Diffusion (GED) of APK-based sketches}
APK-based sketches are inherently unbiased, allowing for a qualitative analysis of error of HH estimation from the perspective of variance. Specifically, GED manifests as an increase in the variance of the HH corresponding to higher layer indexes. 

\begin{figure}[htbp]
\vspace{-0.15in}
\begin{minipage}[b]{0.48\textwidth}
\centering
\includegraphics[width=\textwidth]{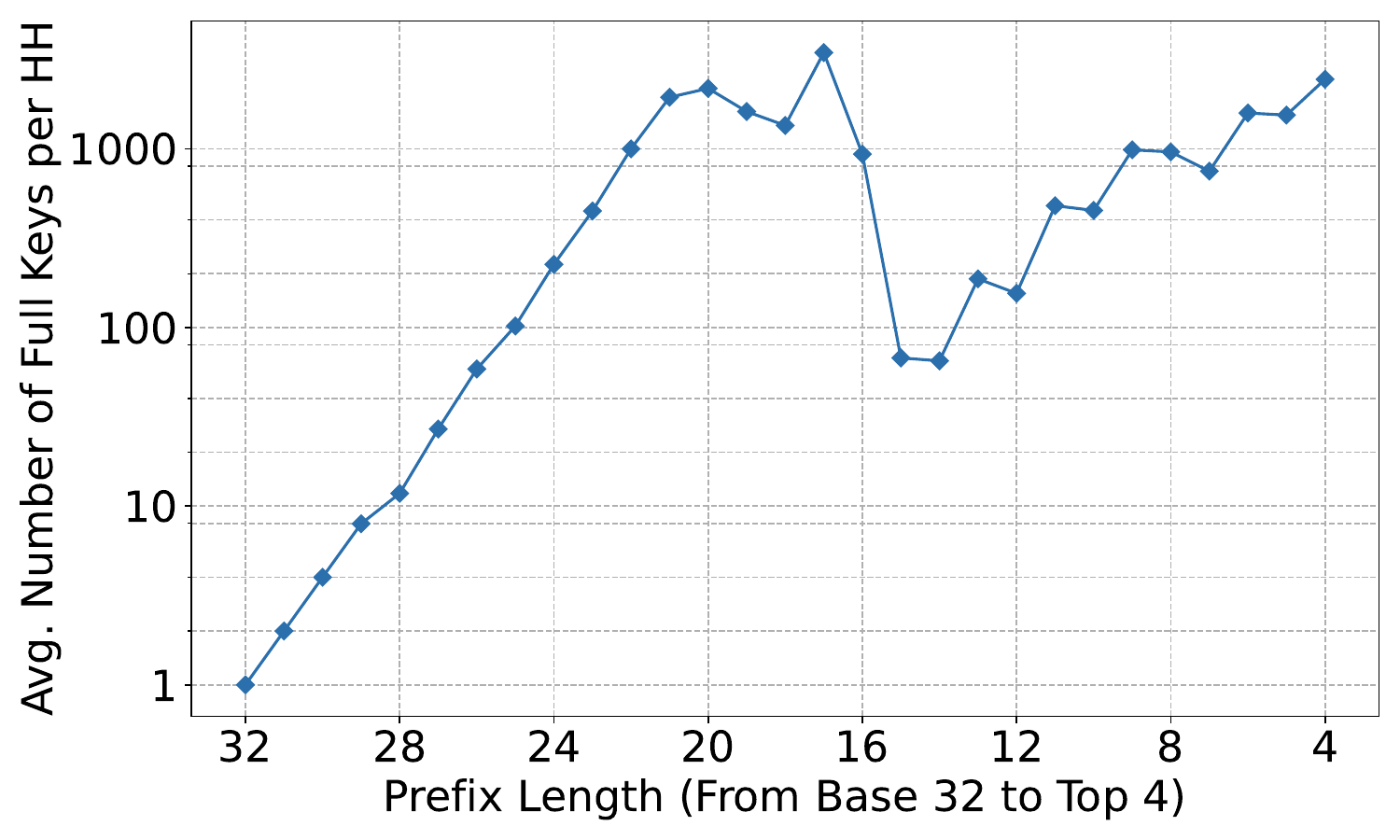}
\end{minipage}
\caption{Average Composition of Full Keys Across Various Layers in the MAWI Dataset \cite{mawi}}\label{HHH_layer_cnt}
\vspace{-0.1in}
\end{figure}

The GED phenomenon emerges due to two key observations. First, the composition of HH becomes increasingly dominated by smaller flows with the advancement of layer index, as evidenced by the average number of full keys constituting each HH in different layers in Figure \ref{HHH_layer_cnt}. 

Second, the variance proof from the USS paper suggests the estimated variance of \(f_p\), \(var(f_p)\), is \(E(N_{min}\kappa_S)\), where \(N_{min}\) is the minimum bucket value, and \(\kappa_S\) is the count of insertions before an element becomes "sticky" in the sketch. As the layer depth increases, the proportion of smaller flows in the HH elements' set increases, leading to a rise in \(\kappa\). Large flows quickly attain "stickiness" in the sketch, resulting in a lower \(\kappa\), whereas HHs with many smaller flows are less likely to become "sticky" immediately, increasing \(\kappa\) with more frequent counts of smaller flows. When \(\kappa\) approaches its peak at the HH frequency, the error margin stabilizes. This stabilization is evidenced by the trend towards a consistent F1 score in subsequent layers, as illustrated in Figure \ref{HHH_layer_error}.

The characteristic of GED is reflection of unfairness in the results for large and small flows in APK-based sketches concerning the HHH problem. Previous research on the Double-anonymous sketch \cite{doubleanonymous} has already identified the exposure of this unfair weakness in the context of the global top-K issue. This also signifies that GED, due to its gradual nature, does not suffer from the complete inaccuracy in estimating small flows as seen with previous frequency estimation methods such as SpaceSaving \cite{spacesaving}. Therefore, for the HHH problem, it is only necessary to restructure the data on critical layers before the error diffusion becomes significantly large.

\vspace{-0.05in}
\subsection{Space and Time Complexities of ResidualSketch}
In 1D HHH detection, the analysis presented in \cite{hhh2012} indicates that the data structure at each level requires $\frac{1}{\epsilon}$ buckets to guarantee a maximum error of $\epsilon N$ in the worst case, where $0<\epsilon<\theta<1$. Consequently, ResidualSketch identifies HHHs utilizing $O(\frac{L}{\epsilon})$ space in worst case. For the time complexity, the update process at each level is nearly equivalent to the update mechanism of the original Residual Block algorithm, aside from the Residual Connection operation. As a result, the update at each level can be carried out in \(O(1)\) time, leading to a worst-case update time of $O(L)$ across all levels.

It is noteworthy that in scenarios involving actual skewed distribution flows, large flows at the lower levels, upon reaching the algorithm's predefined threshold, are not recorded in the upper layers through the Residual Connection and will not be updated in the upper layers with subsequent arrivals. Consequently, the actual time and space consumed are significantly less than the values mentioned above.

\vspace{-0.05in}
\subsection{Unbiasness of ResidualSketch}
\begin{Lem}
ResidualSketch provides an unbiased estimation for any flow key $f_p$, for any prefix length $p$, $E(\hat{V_{f_p}}) = V_{f_p}$.
\end{Lem}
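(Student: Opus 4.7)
The plan is to decompose $\hat{V}_{f_p}$ as a sum of per-level estimates and reduce ResidualSketch's unbiasness to the known unbiasness of the APK-based sketch (USS, CocoSketch, or Hyper-USS \cite{unbiasedspacesaving,cocosketch,hyperuss}) that instantiates every Residual Block. The overall estimate for prefix $p$ is the sum, over levels $i=1,\ldots,L$, of the value that the block at level $i$ reports for the ancestor of $p$ at that level's responsible layer. Since the pipeline's inter-level routing is a deterministic function of lower-level counter values, independent of the random sampling performed inside upper-level blocks, linearity of expectation will let me handle the levels separately.

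First I would formalize a \emph{mass-conservation invariant}: for every prefix $p$, the sum over $i$ of the true mass that ResidualSketch assigns to level $i$ on behalf of the ancestor of $p$ equals $V_{f_p}$. The Residual Connection is engineered precisely to preserve this. For each descendant $f \preceq p$, updates propagate down the pipeline until $f$'s counter at some level $j$ reaches the threshold $\theta_j$; at that moment the algorithm subtracts $\theta_j$ from the corresponding bucket at the next level, cancelling the mass that had been double-counted, and from then on records only at level $j$. A direct case analysis over the three insertion cases in Algorithm \ref{insertion_algo} shows that the per-level mass is a deterministic function of the stream and that these masses sum to $V_{f_p}$.

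Next I would invoke the per-level unbiasness of the APK-based sketch. Conditioning on the random state of levels $1,\ldots,i-1$---which entirely determines the effective substream routed to level $i$ and is independent of level $i$'s sampling randomness---the sketch at level $i$ returns an estimate whose conditional expectation equals the deposited mass at level $i$, by the PPS-sampling arguments of \cite{unbiasedspacesaving,cocosketch,hyperuss}. Summing over $i$ and applying the tower property together with the invariant yields $E[\hat{V}_{f_p}] = V_{f_p}$. The argument also covers prefixes $p$ that never cross a threshold, since then all mass stays at a single level and the claim reduces immediately to the unbiasness of the underlying APK sketch.

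The principal obstacle is verifying that the deterministic decrement-by-$\theta_j$ step preserves the APK-based sketch's unbiasness at the upper level, because the decrement modifies a bucket whose stored key may have been installed through PPS sampling and whose stored value is itself a random estimator. I would address this by observing from Algorithm \ref{insertion_algo} that the decrement fires only on the key-match branch, so it never races with a replacement decision; it therefore acts as a deterministic negative contribution of size $\theta_j$ attributable to the locked descendant $f$, exactly cancelling the earlier positive contributions that were routed through the same key-match branch before the lock. The upper-level sketch then behaves as if driven by a reduced input stream with $f$'s pre-lock contributions excised, and the APK unbiasness analysis applied to this reduced stream delivers the required conditional unbiased estimate.
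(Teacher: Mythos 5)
Your proposal follows essentially the same route as the paper's proof: establish (or cite) the per-insertion unbiasedness of the PPS replacement inside a single Residual Block, then decompose the estimate for a prefix into the contribution recorded at its own level plus the contributions of descendant heavy hitters locked at lower levels, and conclude by linearity of expectation. Your extra scaffolding --- the mass-conservation invariant, the conditioning on lower-level randomness, and the tower property --- makes explicit what the paper compresses into the single line $E(\hat{V}_{f_p}) = \sum_{e \in D \cup H} E(\hat{V}_e) = V_{f_p}$, but it is the same argument.
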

\begin{proof}Residual Sketch segments the prefix length into sequential ranges $(d - l_{i+1}, d - l_i]$ utilizing Residual Blocks. For a flow key of a particular prefix length $p$, its estimate is derived from the corresponding Residual Blocks. 

Initially, we examine scenarios devoid of residual connections, where the prefix length $p$ is precisely $d - l_i$ and corresponds to layer number $l_i$ at level $i$. In such instances, a flow key $(f_p,v_{f_p})$ with prefix length $p$ is treated equivalently to a full key for level $i$. In insertion, if a matching bucket is found, its value increments by $v_{f_p}$, maintaining an unbiased estimate for $f_p$. If no match is found, the smallest bucket, $B^{\text{min}}$, identified as $B_i[\text{min}][h_{i,\text{min}}(f_p)]$, is updated. This bucket's key is replaced with probability $P_{f_p} = \frac{v_{f_p}}{B^{\text{min}}.v+v_{f_p}}$ and its value increases by $v_{f_p}$. The expected increase for the original key in $B^{\text{min}}$ is:
\[
     P_{f_p} \cdot 0 + (1 - P_{f_p}) \cdot (B^{\text{min}}.v + v_{f_p})- B^{\text{min}}.v = 0
\]
The expected increase for key $f_p$ is:
\[
     P_{f_p} (B^{\text{min}}.v + v_{f_p}) + (1 - P_{f_p}) \cdot 0 - 0 = v_{f_p}
\]

Therefore,  after each insertion, the estimation of related keys of prefix length $d - l_i$ are unbiased. 

For prefix lengths engaging residual connections, the key at level $i$ aggregates keys with prefix $d - l_i$ within descendant set $D$. Adding estimations from descendant HHs in set $H$, the expected value $E(\hat{V_{f_p}})$ equals the sum of expected values for $D \cup H$, expressed as $\sum_{e\in D \cup H}E(\hat{V_e})=\sum_{e\in D \cup H}V_e=V_{f_p}$.

\end{proof}

\subsection{Variance Increment of ResidualSketch}
In the context of a single level update, as delineated in the Cocosketch analysis \cite{cocosketch}, each substitution of $(f_p,v_{f_p})$ into $B^{\text{min}}$ results in an incremental variance of $2\cdot v_{f_p}\cdot B^{\text{min}}.v$. The ResidualSketch offers enhancements over APK-based sketches through two mechanisms: aggregation of full keys at a higher level reduces the likelihood of replacements; and the implementation of a Residual Connection leads to a reduction in $B^{\text{min}}.v$, thereby diminishing the variance increment. Consequently, the reduction in variance increment, coupled with the utilization of HH clustering, mitigates the GED phenomenon.



\section{Experimental Results}

\subsection{Experimental Setup}
\noindent \textbf{Platform and setting}: We perform all experiments on a machine with one 12-core processor (AMD EPYC-7662 CPU @ 2.0GHz) and 32G DDR4 memory. The processor has 768KB L1 cache, 6MB L2 cache for each core, and 192MB L3 cache shared by all cores. The OS is Ubuntu 20.04. 

We implement all codes with C++ and build them with g++ 9.4.0 and -O2 option. The hash functions we use are 32-bit Bob Jenkins Hash. The source code for the implementation is publicly available on GitHub \cite{git_src}.

\label{expset}
\noindent \textbf{Datasets}: We have employed four datasets for our study, including three from real-world sources and one synthetic. They are as follows:

\noindent \textbf{CAIDA Dataset}: This dataset includes streams of anonymized backbone traces from high-speed monitors sourced from CAIDA \cite{caida}. It contains approximately 27 million packets, within which there are about $250$k unique flows based on srcIP.

\noindent \textbf{Campus Dataset}: This dataset consists of IP packets gathered from our university's network system. The Campus Dataset contains a total of 30 million packets, distributed among 61,425 distinct flows based on srcIP.

\noindent \textbf{MAWI Dataset}: Provided by the MAWI Working Group \cite{mawi}, this dataset offers real traffic trace data. The MAWI Dataset contains roughly 88 million packets based on srcIP.

\noindent \textbf{Synthetic Dataset}: Utilizing a skewness modification technique from MVPipe \cite{mvpipe}, we replaced the top 1000 IPs from the CAIDA dataset, representing $54\%$ of traffic, with randomly generated IPs. These IPs are engineered to aggregate into HHs at prefix lengths 20 to 24, enabling effective accuracy assessment of competing algorithms.

\noindent \textbf{Metrics}:

\noindent\textbf{1) Precision Rate (PR)}: PR is defined as $\frac{|\Phi \cap \Theta|}{|\Theta|}$. Here $\Theta$ is the set of reported $HHH_d$ and $\Phi$ is the set of correct $HHH_d$.

\noindent\textbf{2) Recall Rate (RR)}: RR is defined as $\frac{|\Phi \cap \Theta|}{|\Phi|}$. Here $\Theta$ is the set of reported $HHH_d$ and $\Phi$ is the set of correct $HHH_d$.

\noindent\textbf{3) F1 Score}: F1 Score is defined as $\frac{2\text{PR}\cdot\text{RR}}{\text{PR} + \text{RR}}$. 

\noindent\textbf{4) Average Relative Error (ARE)}: ARE is defined as $\frac1{|\Psi|}\sum_{e_i\in\Psi}\frac{\left|f_i-\hat{f_i}\right|}{f_i}$
where $f_i$ is the real frequency of HH $e_i$, $\hat{f_i}$ is its estimated frequency of HH $e_i$, and $\Psi$ is the query set.

\noindent\textbf{5) Throughput}: The number of processed packets per second, in million packets per second (Mpps).

\subsection{Comparison with Prior Art}
In this section, due to the significant differences between HH-based sketches and APK-based sketches, we conduct separate comparisons of our algorithm with each, respectively.

\subsubsection{Comparison with HH-based Sketches}
We compare ResidualSketch with several state-of-the-art HHH algorithms, including full ancestry (FULL) \cite{ancestry}, HHH12 \cite{hhh2012}, RHHH \cite{RHHH} and MVPipe \cite{mvpipe} for HH-based sketches on MAWI dataset. 

\noindent\textbf{Memory (Figure \ref{fig:mawibyteMEM}, \ref{fig:mawibitMEM})}: For HHH12 and RHHH, we allocate memory according to their recommended settings, taking into account thresholds and acceptable error tolerances. For FULL, it dynamically allocates memory and we report its peak memory usage. For MVPipe and our ResidualSketch, we fix the memory usage to $32$KB and $256$KB in 1D-byte and 1D-bit HHH task, respectively. For ResidualSketch, we set two levels on layer $12$ and $32$ and cocosketch (COCO) \cite{cocosketch} as the Residual Block.
\begin{figure*}[htbp] 
\centering
\begin{subfigure}[b]{0.24\textwidth}
\includegraphics[width=\linewidth]{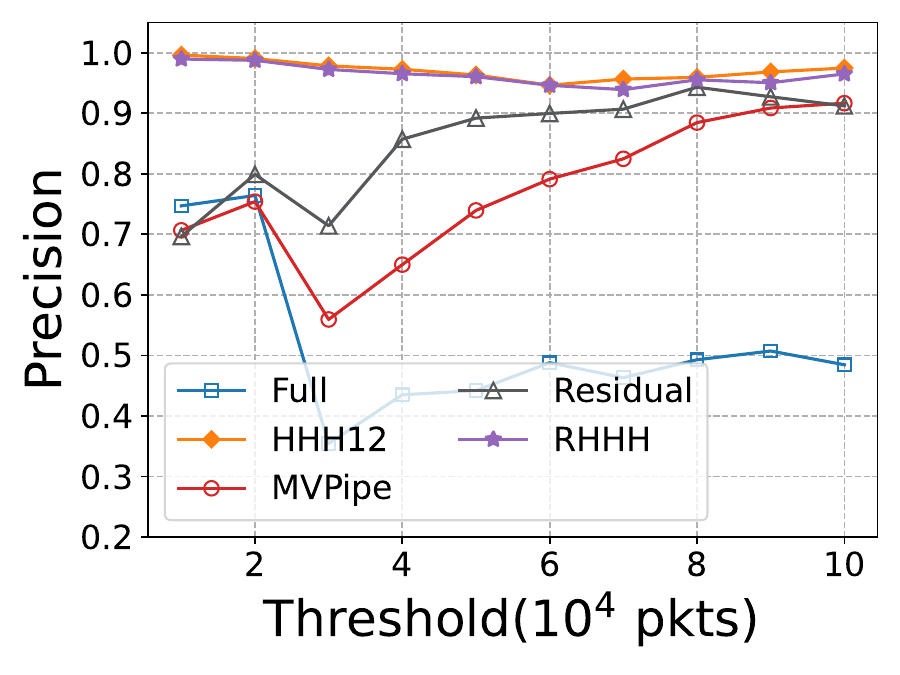}
\caption{Precision for 1D-byte}
\label{fig:mawibytePR}
\end{subfigure}
\hfill 
\begin{subfigure}[b]{0.24\textwidth}
\includegraphics[width=\linewidth]{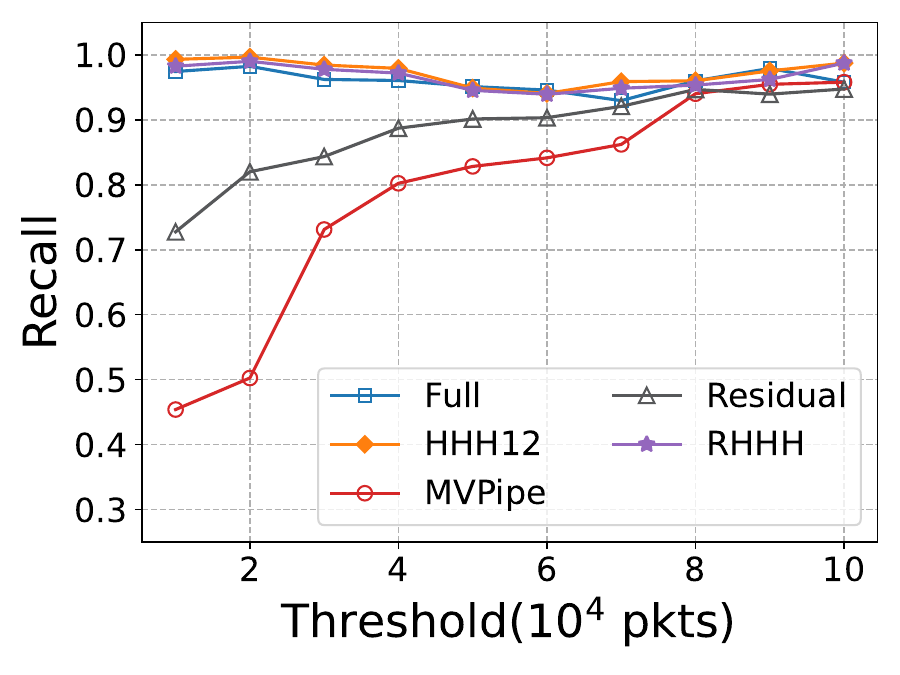}
\caption{Recall for 1D-byte}
\label{fig:mawibyteRR}
\end{subfigure}
\hfill 
\begin{subfigure}[b]{0.24\textwidth}
\includegraphics[width=\linewidth]{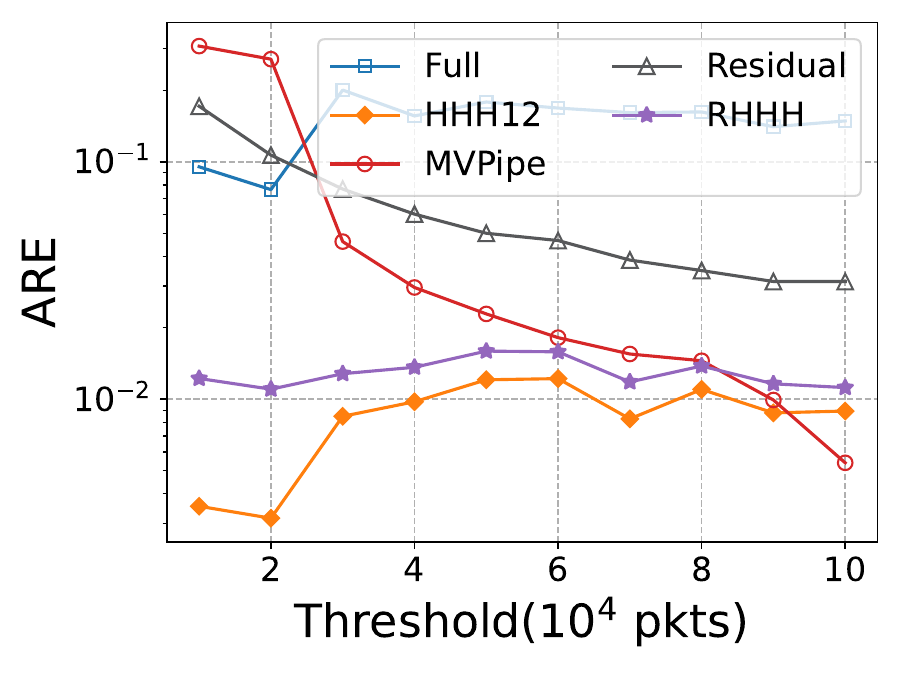}
\caption{ARE for 1D-byte}
\label{fig:mawibyteARE}
\end{subfigure}
\hfill 
\begin{subfigure}[b]{0.24\textwidth}
\includegraphics[width=\linewidth]{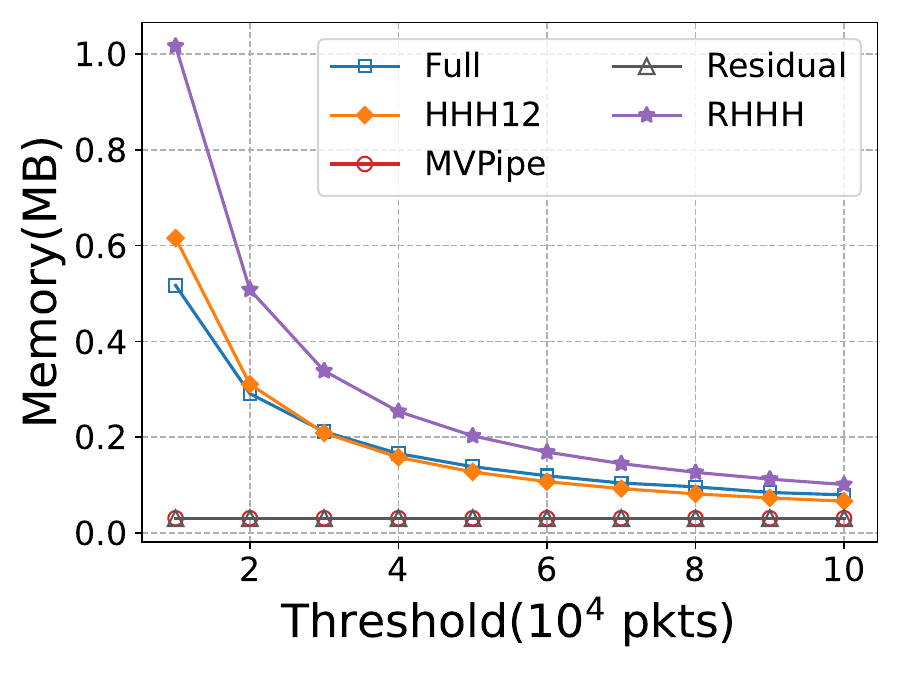}
\caption{Memory Usage for 1D-byte}
\label{fig:mawibyteMEM}
\end{subfigure}
\begin{subfigure}[b]{0.24\textwidth}
\includegraphics[width=\linewidth]{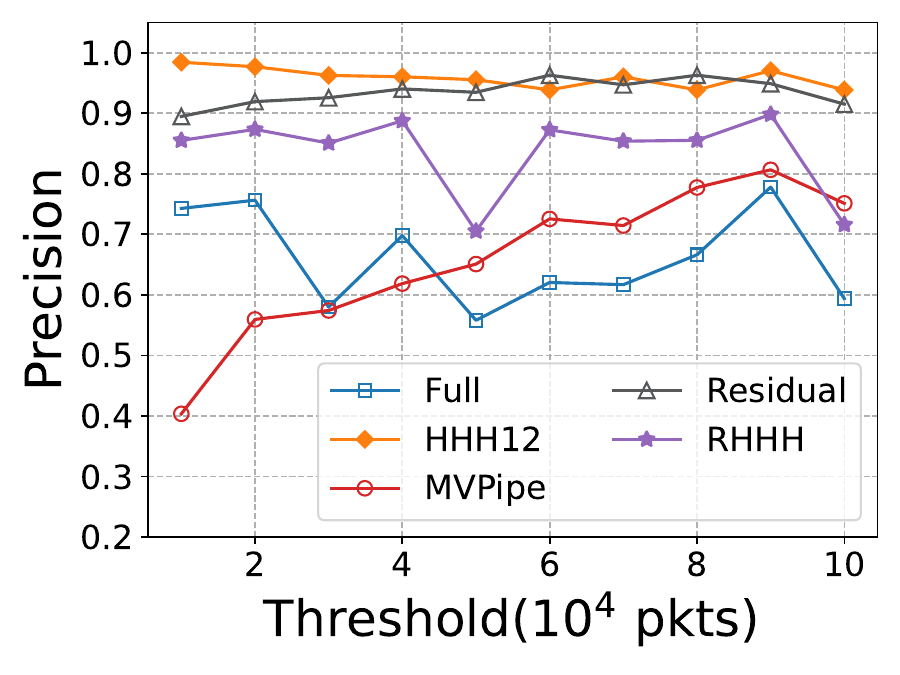}
\caption{Precision for 1D-bit}
\label{fig:mawibitPR}
\end{subfigure}
\hfill 
\begin{subfigure}[b]{0.24\textwidth}
\includegraphics[width=\linewidth]{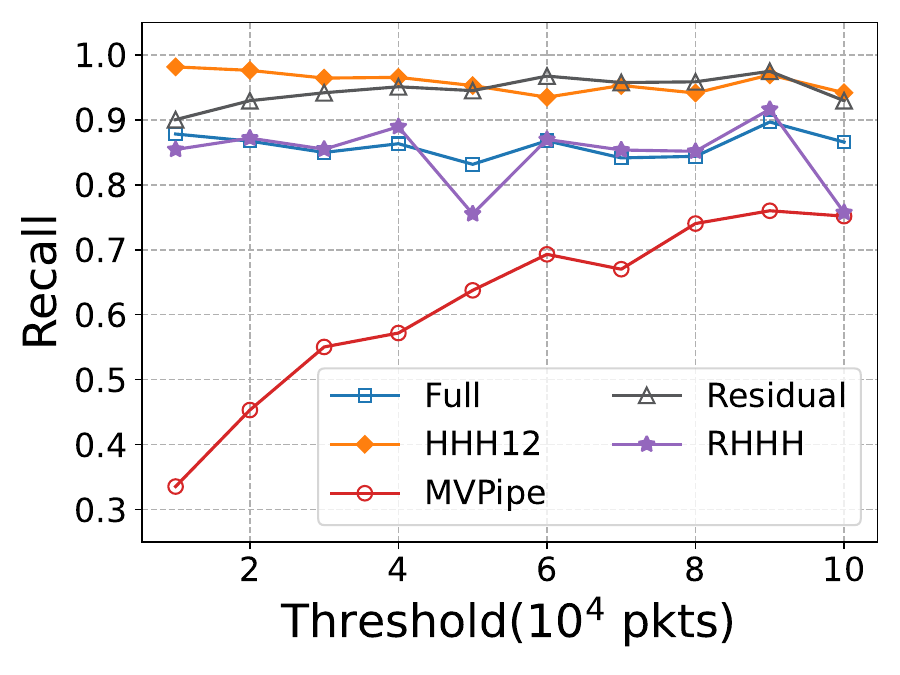}
\caption{Recall for 1D-bit}
\label{fig:mawibitRR}
\end{subfigure}
\hfill 
\begin{subfigure}[b]{0.24\textwidth}
\includegraphics[width=\linewidth]{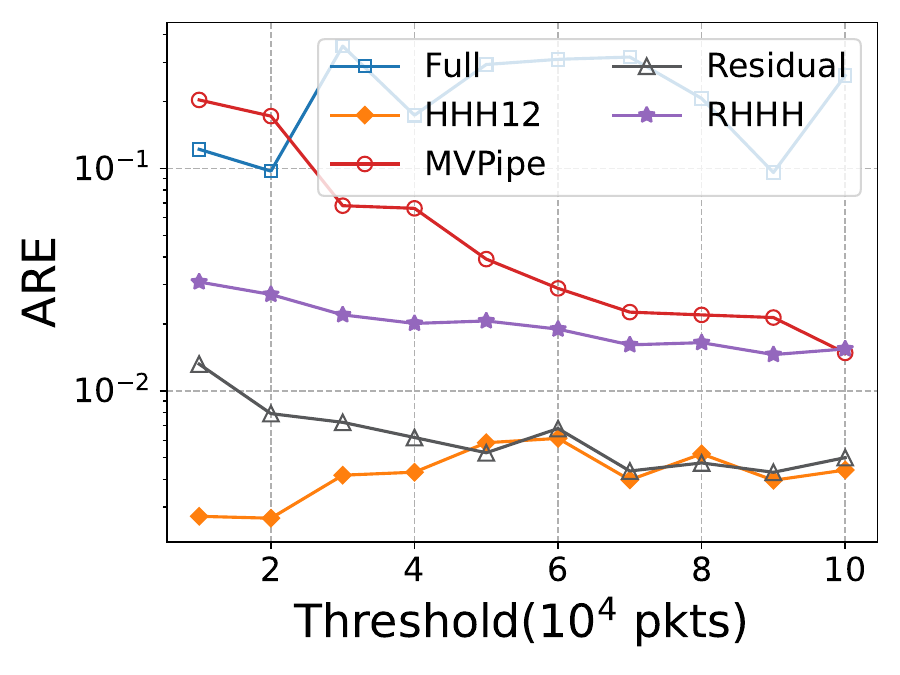}
\caption{ARE for 1D-bit}
\label{fig:mawibitARE}
\end{subfigure}
\hfill 
\begin{subfigure}[b]{0.24\textwidth}
\includegraphics[width=\linewidth]{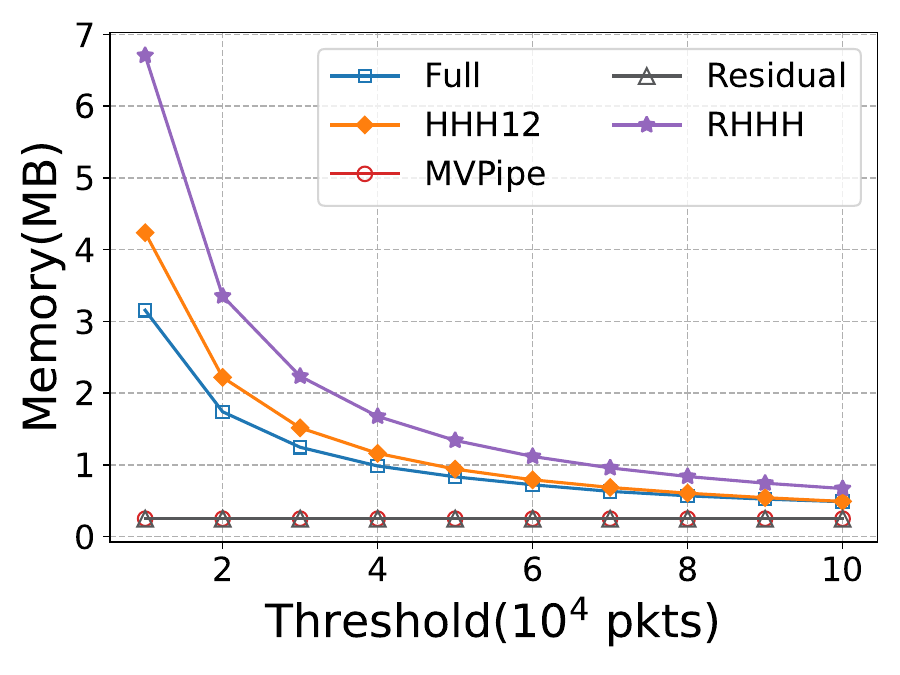}
\caption{Memory Usage for 1D-bit}
\label{fig:mawibitMEM}
\end{subfigure}
\caption{Performance Comparison with HH-based Sketches on MAWI Dataset.}
\label{mawi_exp}
\vspace{-0.1in}
\end{figure*}

\noindent\textbf{PR and RR (Figure \ref{fig:mawibytePR}, \ref{fig:mawibyteRR}, \ref{fig:mawibitPR}, \ref{fig:mawibitRR})}: We find that ResidualSketch maintains high PR and RR, exceeding $90\%$ across various thresholds, even with minimal memory utilization, except for scenarios involving small thresholds within the 1-byte HHH task. Conversely, MVPipe demonstrates comparatively lower accuracy, ranging from $40.3\%$ to $91.6\%$, attributable to its limited memory allocation. For HHH12 and RHHH, the allocation of adequate memory contributes to their attainment of high PR and RR. Meanwhile, FULL is characterized by its tendency to report more HHs than actually exist, resulting in comparatively lower precision. Given that the magnitudes of PR and RR are roughly equivalent except FULL, subsequent experiments will employ the F1 score.

\noindent\textbf{ARE (Figure \ref{fig:mawibyteARE}, \ref{fig:mawibitARE})}: We find that ResidualSketch achieves a low ARE less than $0.01$ in 1D-bit HHH detection, closely approaching the performance of HHH12. However, it does not behave well in 1D-byte case. As there are limited layers in 1D-byte case, other benchmark algorithms have minimal memory waste and achieve better ARE results.

\noindent\textbf{Conclusion}: With minimal memory, ResidualSketch achieves almost best accuracy and lowest ARE in 1D-bit case, though it underperforms in 1D-byte case. With an increased number of layers, HH-based sketches have more memory waste  and the advantages of ResidualSketch become more pronounced. 


\subsubsection{Comparison with APK-based Sketches}
We compare ResidualSketch with USS \cite{unbiasedspacesaving} and Cocosketch (COCO) \cite{cocosketch} for APK-based sketches. Here, our focus is primarily on the performance in detecting HHs aggregated by small flows. Consequently, in our comparison metrics, we exclude the counting of HHs in base layer. For ResidualSketch, we set two levels on layers $12$ and $32$ for CAIDA dataset and three levels on layers $12$, $24$ and $32$ on synthetic dataset. We both use COCO and USS as the Residual Block.

\begin{figure}[htbp] 
\vspace{-0.1in}
\centering
\begin{subfigure}[b]{0.24\textwidth}
\includegraphics[width=\linewidth]{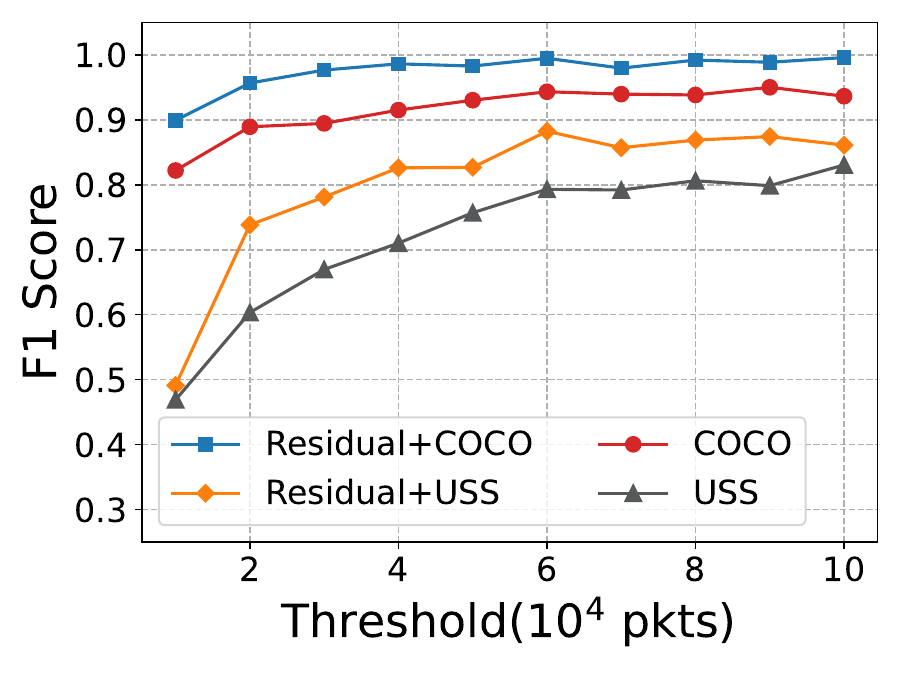}
\caption{F1 Score for 1D-bit}
\label{fig:caidaF1score}
\end{subfigure}
\hfill 
\begin{subfigure}[b]{0.24\textwidth}
\includegraphics[width=\linewidth]{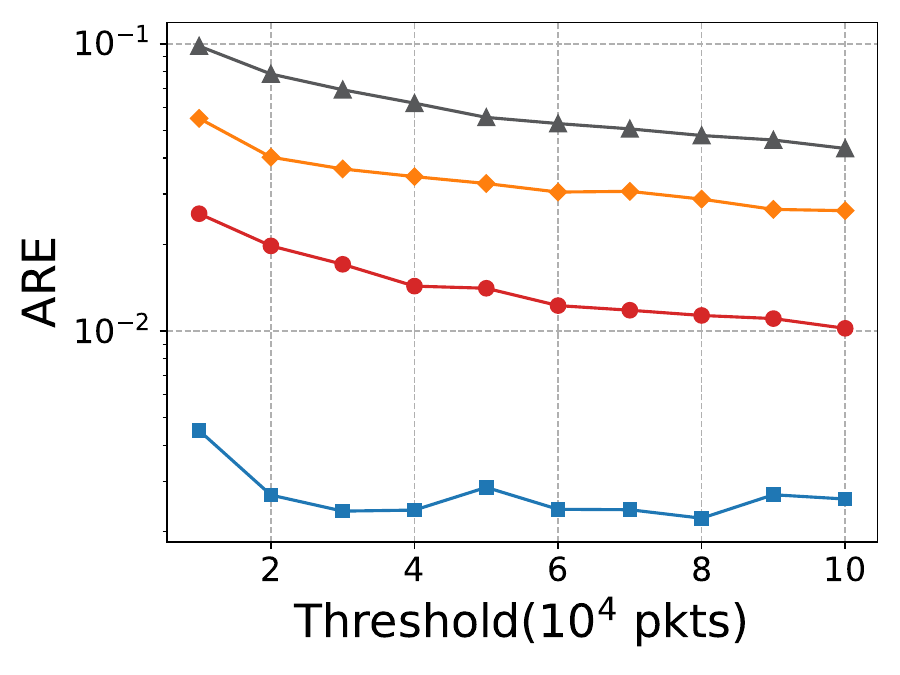}
\caption{ARE for 1D-bit}
\label{fig:caidaARE}
\end{subfigure}
\hfill 
\vspace{-0.1in}
\caption{Performance Comparison with APK-based Sketches on CAIDA Dataset.(\ref{fig:caidaARE}'s legend is the same as \ref{fig:caidaF1score})}
\label{APK_exp}
\vspace{-0.15in}
\end{figure}

\noindent\textbf{Experimental Results on CAIDA (Figure \ref{APK_exp})}: We use $256$KB memory and vary the threshold from $10$k to $100$k. We find that a performance ranking from best to worst is as follows: ResidualSketch with COCO as the Residual Block (Residual+COCO), COCO, Residual+USS, and USS. Notably, the F1 score of Residual+COCO exceeds $90\%$, numerically surpassing COCO by $5\%$.  Residual+COCO's ARE ranges between $0.0022$ and $0.0049$, representing an improvement to one-fifth or even better compared to COCO.

\noindent\textbf{Experimental Results on Synthetic (Figure \ref{APK_exp_synthetic})}: Utilizing $512$KB of memory, a $20$k threshold, and varying the top-1000 ratio from $10\%$ to $50\%$, Residual+COCO stands out as the most effective, achieving an F1 score exceeding $87\%$ and an ARE of less than $0.005$. This represents a $26\%$ improvement in F1 score over COCO, with the ARE being one-tenth of COCO's. As the top-1000 ratio increases, the detection outcomes for all algorithms show enhancement. Remarkably, the performance trend suggests that Residual+USS is comparable with COCO, even though USS employs a StreamSummary structure that requires more space due to its hash table and doubly linked list components. 

\begin{figure}[htbp] 
\vspace{-0.1in}
\centering
\begin{subfigure}[b]{0.24\textwidth}
\includegraphics[width=\linewidth]{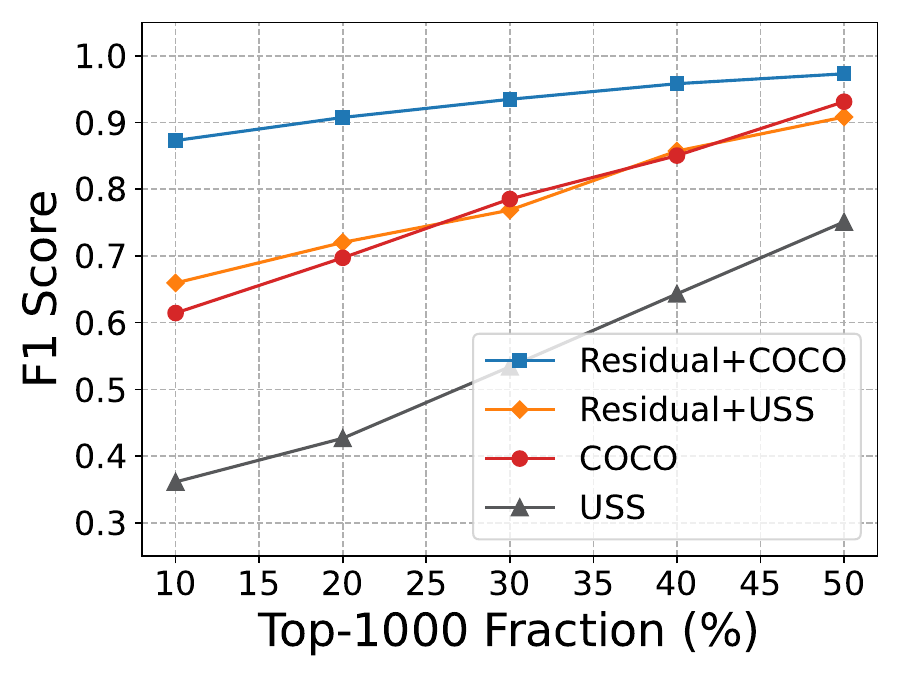}
\caption{F1 Score for 1D-bit}
\label{fig:synF1score}
\end{subfigure}
\hfill 
\begin{subfigure}[b]{0.24\textwidth}
\includegraphics[width=\linewidth]{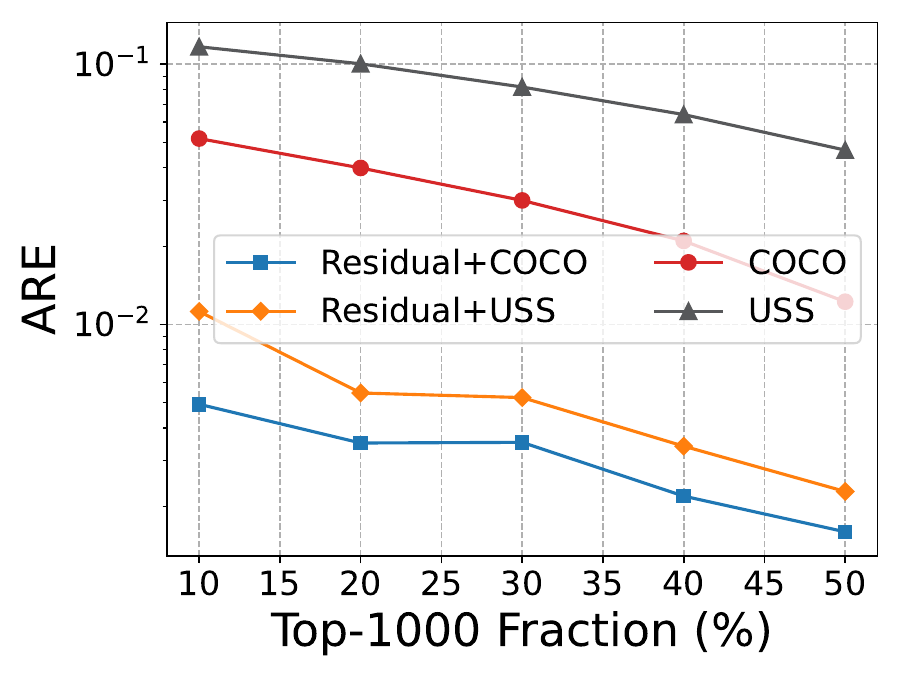}
\caption{ARE for 1D-bit}
\label{fig:synARE}
\end{subfigure}
\hfill 
\vspace{-0.1in}
\caption{Performance Comparison with APK-based Sketches on Synthetic Dataset.}
\label{APK_exp_synthetic}
\vspace{-0.15in}
\end{figure}

\begin{figure}[htbp] 
\centering
\begin{subfigure}[b]{0.48\textwidth}
\includegraphics[width=\linewidth]{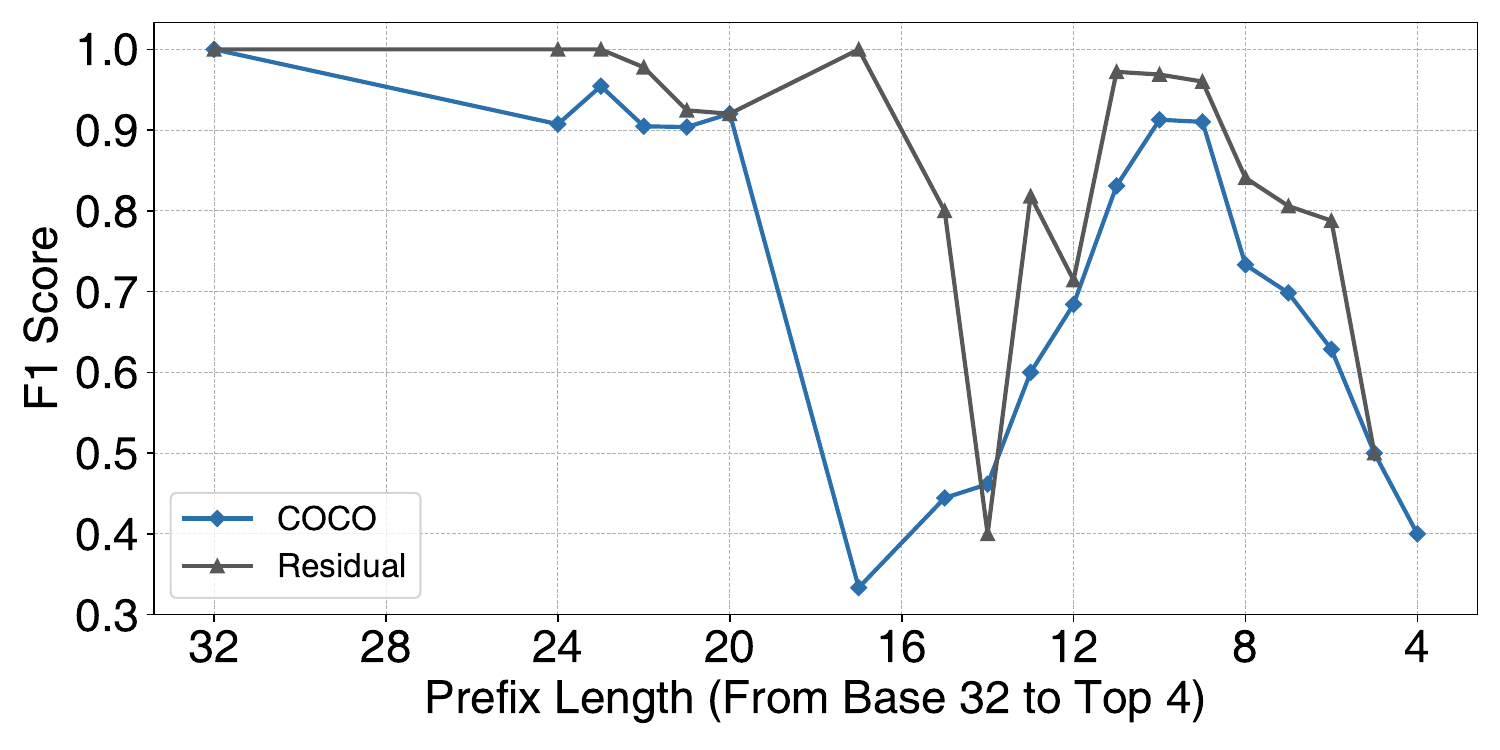}
\label{fig:GED}
\end{subfigure}
\vspace{-0.1in}
\caption{F1 Score for HHH Detection Across Various Layers in the Synthetic Dataset  }
\label{APK_exp_GED}
\vspace{-0.1in}
\end{figure}

\noindent\textbf{Experimental Results on GED (Figure \ref{APK_exp_synthetic})}: In the experiments conducted on the synthetic dataset, we re-examine the phenomenon of GED underlying the COCO algorithm to understand the improved results. We specifically compare the F1 scores of the Residual+COCO relative to the COCO method across different layers. At a threshold of $20$k, memory with $256$KB and a top-1000 ratio of $30\%$, it's observed that the COCO method exhibited a significant decrease in F1 score at higher layers. In contrast, the Residual+COCO method, which reconstructs the data structure at specific intervals, exhibits lower F1 scores only at the 5th and 14th layer, a decrease due to the inaccurate estimation of few isolated HHs at these layers.

\noindent\textbf{Conclusion}:
Our algorithm, building upon the existing APK-based sketches, COCO and USS, demonstrates significant improvements in both F1 score and ARE across both the CAIDA dataset and synthetic datasets. We also analyzed how the application of our strategy mitigated the GED phenomenon in the original COCO, thereby validating the effectiveness of our approach.

\begin{figure}[htbp] 
\vspace{-0.15in}
\centering
\begin{subfigure}[b]{0.24\textwidth}
\includegraphics[width=\linewidth]{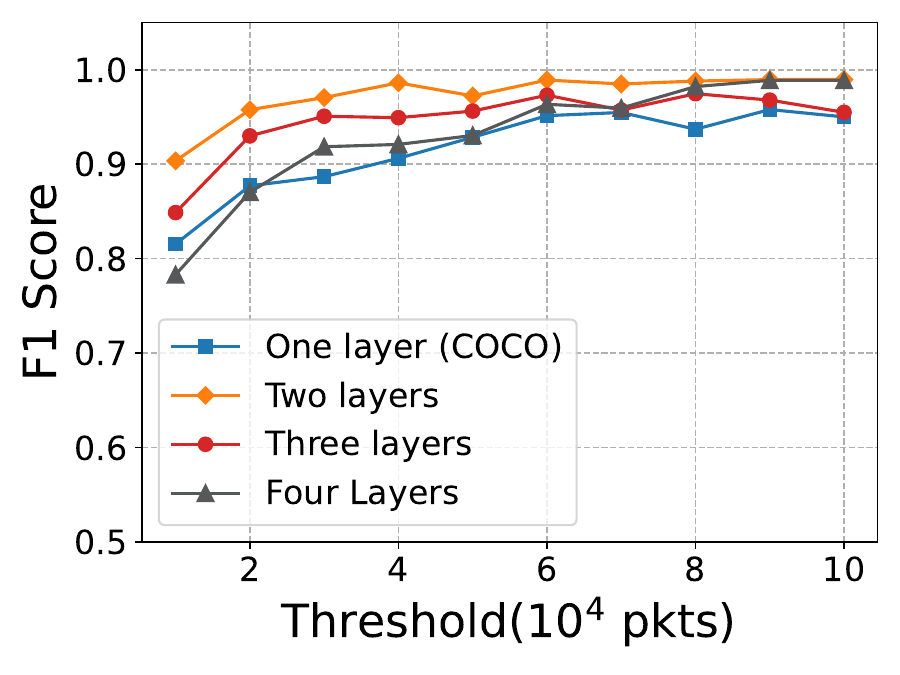}
\label{fig:pmemory_layer_F1_score}
\end{subfigure}
\hfill
\begin{subfigure}[b]{0.24\textwidth}
\includegraphics[width=\linewidth]{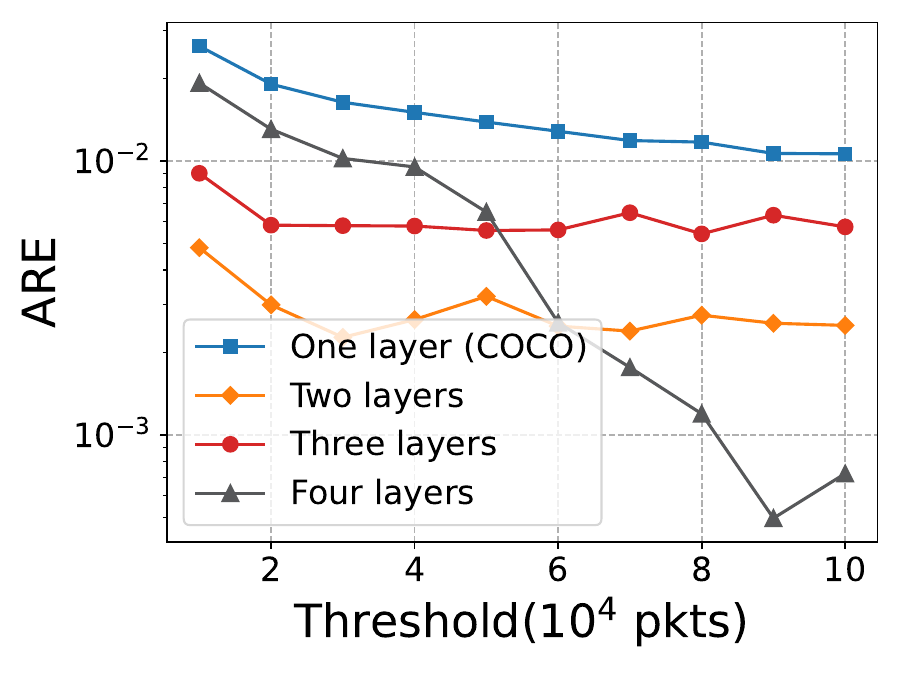}
\label{fig:pmemory_layer_ARE}
\end{subfigure}
\vspace{-0.15in}
\caption{Evaluation of ResidualSketch on Parameter Setting: Layer Number}
\label{parameter_layer}
\vspace{-0.1in}
\end{figure}

\begin{figure}[htbp] 
\vspace{-0.15in}
\centering
\begin{subfigure}[b]{0.24\textwidth}
\includegraphics[width=\linewidth]{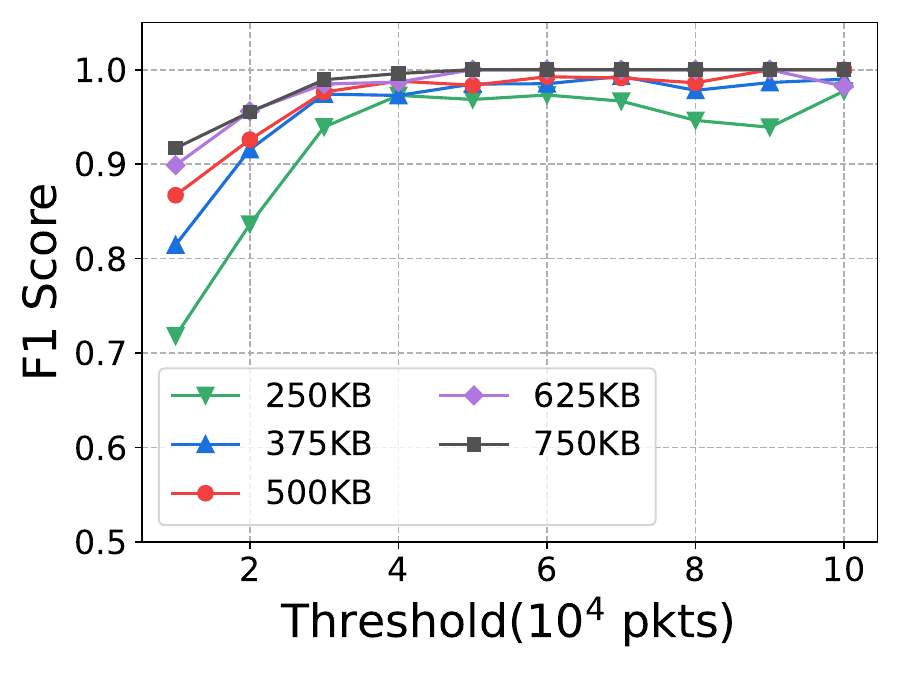}
\label{fig:pmemory_F1_score}
\end{subfigure}
\hfill
\begin{subfigure}[b]{0.24\textwidth}
\includegraphics[width=\linewidth]{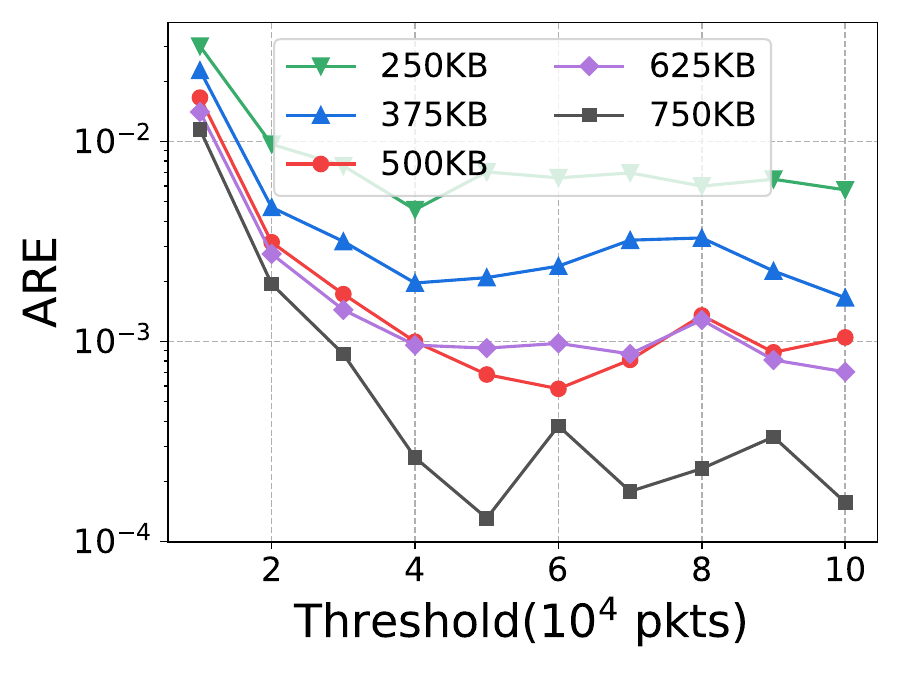}
\label{fig:pmemory_ARE}
\end{subfigure}

\caption{Evaluation of ResidualSketch on Parameter Setting: Memory Size}
\label{parameter_memory}
\vspace{-0.1in}
\end{figure}

\vspace{-0.05in}
\subsection{The impact of Algorithm Parameters}

\noindent\textbf{Experimental results on parameter $L$ (Figure \ref{parameter_layer})}: In this experiment, we vary the number of layers in the CAIDA dataset to observe the impact of layer configuration on performance. We allocate a memory size of 256KB, with two layers set to $12,32$, three layers to $12,24,$ and $32$, consistent with previous experiments. Additionally, four layers are configured as $8, 14, 24, 32$. We discovered that the optimal results were achieved with two layers; the results for three layers were the next best, still outperforming the single-layer configuration, i.e., COCO. With four layers, at lower thresholds, both F1 score and ARE are similar to those observed with a single layer. This outcome suggests that when the number of layers increases, the allocated memory per layer decreases, leading to a scenario where memory constraints, rather than GED, start to dominate performance.

\noindent\textbf{Experimental Results on parameter memory size (Figure \ref{parameter_memory})}: In the synthetic dataset with a top-1000 ratio of $30\%$, we observe that the results for ResidualSketch are relatively robust when adjusting the memory size. Specifically, at a lower threshold of 10k, the F1 score for a memory size of 250KB still reaches $72\%$. At other thresholds, the F1 scores for different memory sizes all exceed $90\%$, with the ARE below $0.01$. 

\begin{figure}[htbp] 
\vspace{-0.1in}
\centering
\begin{subfigure}[b]{0.48\textwidth}
\includegraphics[width=\linewidth]{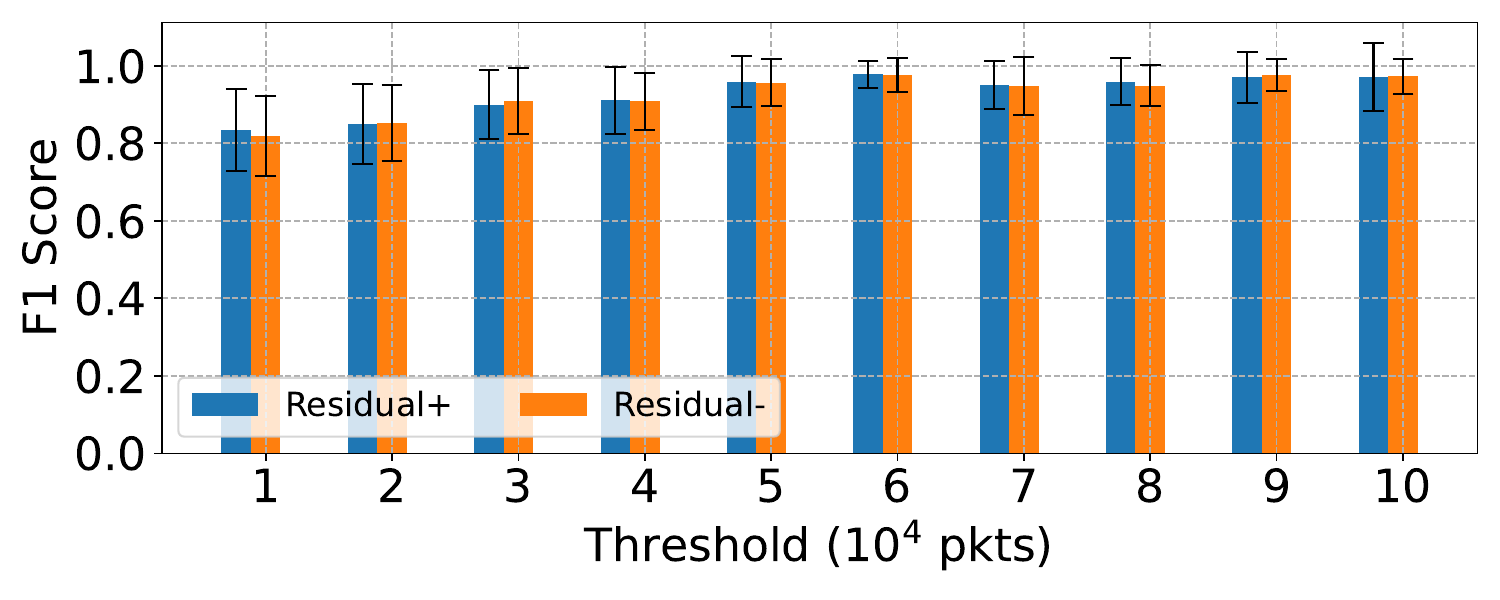}
\label{fig:connection}
\end{subfigure}
\vspace{-0.1in}
\caption{Evaluation of ResidualSketch on Parameter Setting: Residual Connection}
\label{exp_connection}
\vspace{-0.1in}
\end{figure}

\begin{figure}[htbp] 
\vspace{-0.05in}
\centering
\begin{subfigure}[b]{0.48\textwidth}
\includegraphics[width=\linewidth]{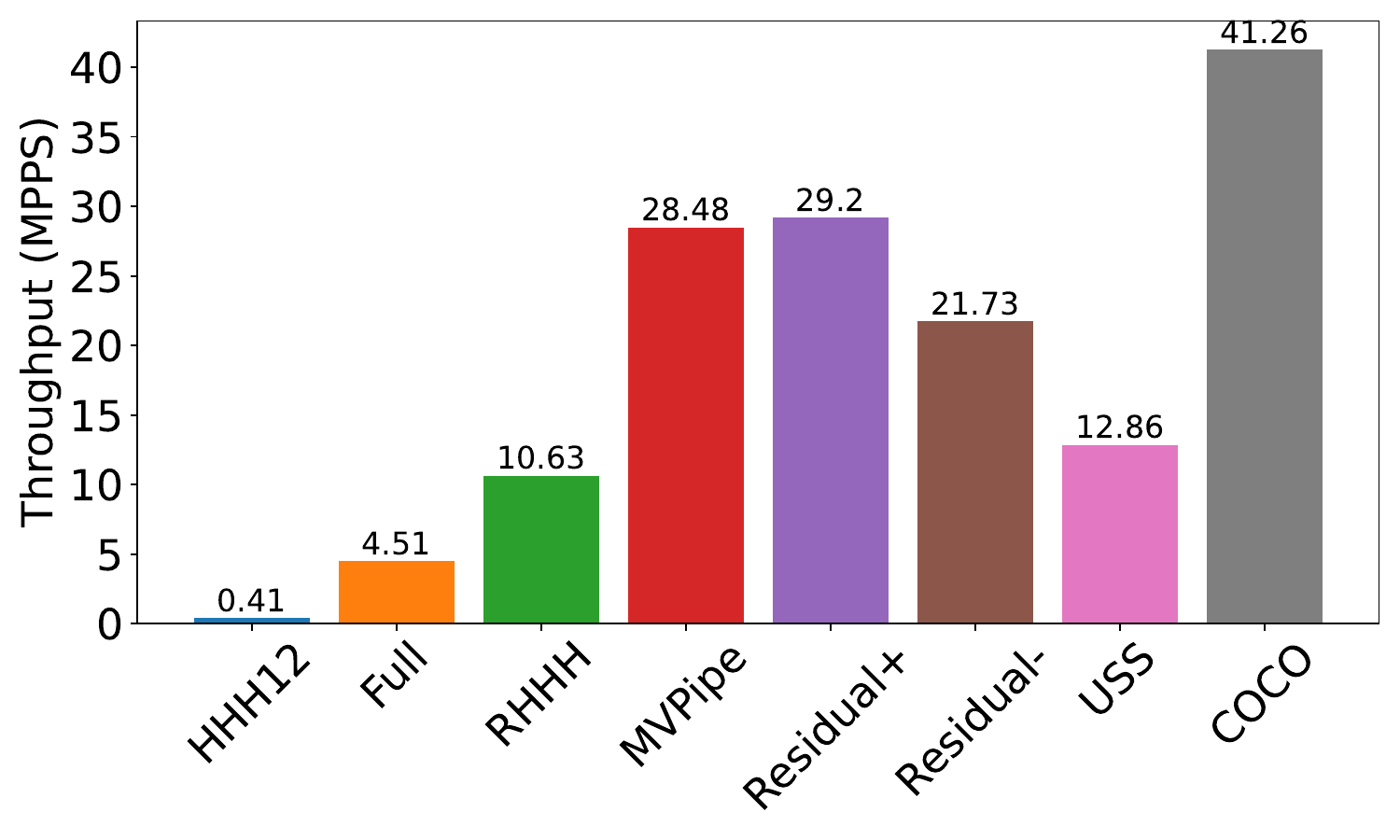}
\label{fig:throughput}
\end{subfigure}
\vspace{-0.1in}
\caption{Processing Speed Comparison on Campus Dataset, including ResidualSketch without Residual Connection}
\label{exp_throughput}
\vspace{-0.1in}
\end{figure}

\noindent\textbf{Experimental Results on Residual Connection (Figure \ref{exp_connection}, \ref{exp_throughput})}: Utilizing a campus dataset with a threshold of $20$k, we assess the accuracy of ResidualSketch both with and without the Residual Connection. The findings indicate comparable accuracy levels, attributed to small memory release from decrement. Notably, the Residual Connection strategy enhances throughput from $21.73$ to $29.2$ Mpps, marking a $34.3\%$ increase. In comparison, while COCO achieves $41.26$ Mpps, MVPipe's throughput is close to ours at 28.48 Mpps.

\noindent\textbf{Conclusion}: We recommend adopting a configuration of either two or three layers, coupled with residual connections, to enhance both accuracy and throughput.

\presec
\section{Conclusion}
\label{sec:conclusion}
\postsec
In this study, we introduce a novel approach for the processing of Hierarchical Heavy Hitters (HHH) in data streams, a task critical for the swift and precise identification of traffic anomalies. Our proposed method, the ResidualSketch, leverages the innovative application of \textit{Residual Blocks} and \textit{Residual Connection} techniques, drawing inspiration from ResNet architectures. This method strategically utilizes Residual Blocks at pivotal layers within the IP hierarchy to reset error propagation and effectively counteract the \textit{Gradual Error Diffusion} that affects existing sketches, while also enhancing throughput through Residual Connections. Through rigorous theoretical analysis and comprehensive experimental evaluations, we demonstrate that ResidualSketch outperforms current state-of-the-art solutions in terms of layer efficiency and error reduction, substantiating its superiority and effectiveness in HHH detection.

\section*{Acknowledgement}
This work was supported by Beijing Natural Natural Science
Foundation (Grant No. QY23043). The corresponding authors are Gaogang Xie and Tong Yang.

\bibliographystyle{unsrt}
\bibliography{reference.bib}

\clearpage
\end{document}